\documentclass[11pt]{article}
\usepackage[margin=1in]{geometry}
\usepackage{amsmath,amssymb,amsthm,mathtools}
\usepackage{algorithm}
\usepackage{algpseudocode}
\usepackage{natbib}
\usepackage{enumitem}
\usepackage{graphicx}
\usepackage{csvsimple}
\usepackage{longtable}
\usepackage{booktabs}
\usepackage{caption}
\usepackage{authblk}

\newtheorem{theorem}{Theorem}
\newtheorem{proposition}{Proposition}
\newtheorem{lemma}{Lemma}

\newcommand{\R}{\mathbb{R}}

\newcommand{\PP}{\mathbb{P}}
\newcommand{\EE}{\mathbb{E}}
\newcommand{\ind}{\mathbf{1}}
\newcommand{\calR}{\mathcal{R}}
\newcommand{\calT}{\mathcal{T}}
\newcommand{\calC}{\mathcal{C}}
\newcommand{\calK}{\mathcal{K}}
\newcommand{\PiSplit}{\Pi}
\newcommand{\argmin}{\operatorname*{arg\,min}}
\newcommand{\argmax}{\operatorname*{arg\,max}}

\usepackage{hyperref}
\hypersetup{
    colorlinks=true,
    linkcolor=blue,
    citecolor=blue,
    urlcolor=blue
}

\numberwithin{equation}{section}

\title{Distribution-Free Changepoint Inference for Rainfall Patterns}

\author[1]{Aaditya Jain\thanks{aadi.jain162@gmail.com}}
\author[2]{Abhishek Bhattacharjee\thanks{abhishek@theabstractmath.com}}

\affil[1,2]{Abstract Math Institute}
\affil[1]{\texttt{aadi.jain162@gmail.com}}
\affil[2]{\texttt{abhishek@theabstractmath.com}}

\begin{document}

\maketitle

\section{Introduction}
\label{sec:introduction}

Long records of rainfall, humidity, and related hydroclimatic variables are increasingly available from monitoring networks that collect daily measurements over many years. These data contain information not only about annual totals or seasonal averages, but also about the temporal organization of rainfall within a year: the timing of wet and dry spells, the strength of periodic components, and the distribution of variation across intra-annual frequencies. Detecting whether, and when, this temporal pattern changes is a central problem in climate-impact assessment, water-resource planning, agricultural risk management, and environmental monitoring.

A common data structure consists of observations from multiple monitoring locations over a long sequence of years. For each location and each year, the observed record is a vector of daily measurements, such as rainfall or humidity, of length approximately \(365\). Let \(m\) denote the number of monitoring locations and let \(T\) denote the number of years. The data may then be represented as
\[
\bigl\{X_{i,t} = (X_{i,t,1},\ldots,X_{i,t,D})^\top:
    i=1,\ldots,m,\ t=1,\ldots,T\bigr\},
\]
where \(i\) indexes location, \(t\) indexes year, and \(D=365\) for non-leap years. The inferential goal is to identify an unknown year \(t_0\) at which the temporal rainfall pattern changes across the region. Unlike a change in a scalar annual summary, a structural change in the intra-annual pattern may affect the frequency-domain behavior of the daily sequence within each year. This motivates representing each yearly curve through a spectral feature that captures the strength of temporal oscillation at a fixed frequency.

Classical changepoint methods often rely on parametric likelihoods, Gaussian approximations, large-sample calibration, or model-specific assumptions on the distribution of the observations. Such assumptions can be difficult to justify for rainfall data, which are typically skewed, intermittent, heavy-tailed, seasonally heterogeneous, and spatially variable. Moreover, many procedures return only a point estimate of the changepoint, although in environmental applications a confidence set for the change year is often more useful: it quantifies the temporal uncertainty in the estimated onset of a structural shift and allows practitioners to distinguish sharp transitions from weak or gradual evidence.

This paper develops a distribution-free framework for detecting a common structural change in yearly rainfall patterns across multiple independently monitored locations. For each location-year pair, we compute a fixed-frequency nonparametric spectral density estimate from the daily record. These scalar spectral features form an \(T\times m\) data matrix, with rows corresponding to years and columns corresponding to locations. The changepoint problem is then formulated as a distribution-free inference problem for a synchronized distributional change in the columns of this matrix. The method exploits exchangeability of the yearly spectral features before and after the true changepoint, together with independence across monitoring locations, to construct finite-sample valid confidence sets for the unknown change year.

\subsection{Our Contributions}
\label{subsec:our-contributions}

In this paper, our contributions are as follows:
\begin{itemize}
    \item We introduce a frequency-domain framework for temporal rainfall pattern change detection. Instead of reducing each year to an annual total, annual mean, or prespecified seasonal index, we represent each yearly daily record through a nonparametric estimate of the spectral density at a fixed frequency \(\theta\in[0,2\pi]\). This representation targets changes in the intra-annual organization of rainfall and provides a principled scalar feature for each location-year pair.

    \item We develop a distribution-free multi-location changepoint procedure for detecting a synchronized structural change in rainfall patterns across a monitoring network. The yearly spectral features are arranged as a \(T\times m\) matrix, with years as rows and monitoring locations as columns, and the changepoint problem is formulated as inference on a common distributional change across the columns of this matrix.

    \item We construct confidence sets for the unknown change year rather than reporting only a point estimate. The proposed method evaluates candidate changepoints through conformal split-permutation \(p\)-values and returns the set of years that cannot be rejected at a prescribed significance level. This output is directly interpretable for environmental applications, where uncertainty in the timing of a structural transition is scientifically relevant.

    \item We establish finite-sample distribution-free validity under cross-location independence and segment-wise exchangeability of the yearly spectral features before and after the true changepoint. The coverage guarantee does not rely on Gaussianity, parametric rainfall models, large-sample approximations, or consistent estimation of nuisance distributions.

    \item We provide a rigorous integration of nonparametric spectral estimation with distribution-free changepoint inference. The spectral step captures changes in temporal rainfall patterns within each year, while the conformal changepoint step supplies finite-sample valid uncertainty quantification for the common change year.

    \item We study the empirical behavior of the method through simulation experiments that assess coverage, localization accuracy, and sensitivity to structural changes in the intra-annual rainfall pattern.
\end{itemize}

\paragraph{Structure of the paper.}
The remainder of the paper is organized as follows. Section~\ref{sec:framework_method} introduces the statistical framework, defines the spectral rainfall features, presents the proposed distribution-free changepoint algorithm, and establishes its finite-sample theoretical guarantees. Section~\ref{sec:simulation-studies} reports simulation studies designed to evaluate the empirical coverage, localization behavior, and sensitivity of the method under controlled structural changes in temporal rainfall patterns. Section~\ref{sec:conclusion} concludes with a discussion of the scope of the method, its practical implications, and directions for future work.

\section{Existing Works}
\label{sec:existing_works}

The problem of detecting structural breaks in environmental time series intersects several distinct areas of statistical literature, ranging from classical changepoint detection to modern distribution-free inference. 

\paragraph{Classical and Parametric Changepoint Detection.}
Traditional approaches to changepoint detection typically rely on cumulative sum (CUSUM) statistics, likelihood ratio tests, or penalized cost functions \citep{page1954continuous, killick2012optimal}. While these methods are computationally efficient and well-understood asymptotically, they frequently require parametric assumptions—such as Gaussianity or specific exponential family distributions—or rely on large-sample approximations to calibrate critical values. In the context of hydroclimatic data, these assumptions are often violated. Rainfall records are notoriously intermittent, highly skewed, and heavy-tailed, making asymptotic calibration highly unreliable in finite samples and leading to inflated false positive rates or invalid confidence intervals \citep{reeves2007review}.

\paragraph{Functional and Spectral Time Series Analysis.}
To capture changes in intra-annual patterns rather than just annual scalar aggregates, recent literature has shifted toward functional data analysis (FDA) and spectral methods. Treating daily yearly records as functional curves allows researchers to detect shape or phase changes over time \citep{aston2012evaluating, aue2009estimation}. In the frequency domain, spectral density estimation provides a powerful tool for isolating changes in the periodic structure of time series and random fields \citep{deb2017asymptotic}. However, while spectral and functional methods excel at feature representation, inferring the exact timing of a changepoint and constructing valid confidence sets for these high-dimensional representations typically requires complex bootstrap procedures or reliance on asymptotic distributions that may struggle in finite networks.

\paragraph{Distribution-Free and Conformal Inference.}
To circumvent the limitations of parametric and asymptotic calibration, there has been a recent surge in distribution-free inference techniques. Leveraging exchangeability, conformal inference and permutation-based methods provide exact, finite-sample guarantees for predictive intervals and hypothesis testing without specifying an underlying data-generating mechanism \citep{vovk2005algorithmic, lei2018distribution}. Recently, these ideas have been extended to changepoint and root-cause analysis via split-permutation techniques \citep{hore2026distributionfree}. By integrating robust spectral feature extraction with conformal split-permutation \(p\)-values, our proposed framework bridges the gap between complex temporal pattern recognition and rigorous, finite-sample uncertainty quantification for environmental monitoring networks.

\section{Framework and Our Method}\label{sec:framework_method}

We observe annual rainfall records from $m$ spatially separated stations over $n$ calendar years.  The word station will be used for the physical location or the server attached to that location.  For station $j\in[m]:=\{1,\ldots,m\}$ and year $i\in[n]$, let
\[
        Y_{i,j}=(Y_{i,j}(1),\ldots,Y_{i,j}(q))\in\R^q
\]
denote the within-year record, where $q=365$ after applying a fixed calendar convention for leap years and missing days.  The coordinate $r\in[q]$ indexes day $r$ of the year.  The method below is stated for a scalar daily measurement, such as rainfall or humidity.  If several meteorological variables are available, $Y_{i,j}(r)$ may be replaced by any pre-specified scalar transformation of the multivariate record, or the construction may be repeated over several coordinates and frequencies before combining the resulting scores.  All such transformations must be fixed independently of the split-permutation step described below.

The target is a temporal structural change in the annual rainfall pattern.  We allow the station-specific changepoint vector to be
\[
        \xi=(\xi_1,\ldots,\xi_m)\in \calR\subseteq \{1,\ldots,n\}^m,
\]
where $\xi_j=n$ is interpreted as no change at station $j$ over the observation window.  Thus $\xi_j$ is the last pre-change year at station $j$, and the first post-change year is $\xi_j+1$ whenever $\xi_j<n$; if one prefers to report the first changed calendar year, all reported changepoints should be shifted by one.  The synchronized regional model, which is the primary case in this paper, imposes
\[
        \calR_{\mathrm{syn}}=\{(\tau,\ldots,\tau):\tau\in\calT\},\qquad
        \calT=\{\ell,\ell+1,\ldots,n-\ell\},
\]
for a fixed trimming parameter $\ell\geq 1$.  In this case the unknown regional change year is $\tau_0$, with $\xi=(\tau_0,\ldots,\tau_0)$.  More general sets $\calR$ can encode delayed propagation, excluded boundary years, or station-specific prior information.  For example, a bounded-lag model can be written as $\calR=\{t\in\calT^m:\max_j t_j-\min_j t_j\leq L\}$ for a known lag $L$.

The distributional assumption is deliberately weak.  For each station $j$, conditional on the changepoint $\xi_j$, the pre-change block $(Y_{1,j},\ldots,Y_{\xi_j,j})$ is exchangeable, the post-change block $(Y_{\xi_j+1,j},\ldots,Y_{n,j})$ is exchangeable, and these two blocks are independent.  The $m$ station-level arrays are mutually independent.  No parametric form is imposed on the distribution of the daily vectors, and the pre- and post-change laws may differ across stations.  The synchronized model adds only the restriction that the $m$ changepoints are equal.  This is the assumption under which the finite-sample confidence sets below are valid.  Dependence across days within a year is unrestricted for validity, because each full annual vector is treated as one observation before being reduced to a spectral summary.  Conditions on within-year dependence are needed only if the spectral summary is to be interpreted as a consistent estimator of an underlying spectral density.

We now define the spectral feature used for each station-year pair.  Fix a frequency $\theta\in[0,2\pi]$, a symmetric kernel $K:\R\to\R$ satisfying $K(0)=1$, and a bandwidth $B_q>0$.  Let
\[
        \bar Y_{i,j}=q^{-1}\sum_{r=1}^q Y_{i,j}(r),\qquad
        X_{i,j}(r)=Y_{i,j}(r)-\bar Y_{i,j}.
\]
The centering is deterministic within each year and can be omitted or replaced by another pre-specified centering convention if the scientific target requires the annual mean to be retained.  The fixed-frequency nonparametric spectral feature is
\begin{equation}\label{eq:spectral_feature}
        Z_{i,j}(\theta)
        =\widehat f_{i,j,q}(\theta)
        :=\frac{1}{q}\sum_{r=1}^q\sum_{s=1}^q
        X_{i,j}(r)X_{i,j}(s)
        K\!\left(\frac{r-s}{B_q}\right)\cos\{(r-s)\theta\}.
\end{equation}
This is the one-dimensional regular-grid version of the kernel spectral density estimator for random fields.  The cosine form is equivalent to the real part of the complex exponential form and is real-valued for real rainfall records.  For a set of frequencies $\Theta=\{\theta_1,\ldots,\theta_L\}$, one may replace $Z_{i,j}(\theta)$ by the vector $(Z_{i,j}(\theta_1),\ldots,Z_{i,j}(\theta_L))$ and use the same conformal construction with a multivariate score.  To keep notation uncluttered, the rest of the section is written for a single fixed $\theta$ and we set $Z_{i,j}=Z_{i,j}(\theta)$.

The reduction from $Y_{i,j}$ to $Z_{i,j}$ preserves the exchangeability needed for distribution-free inference.  Indeed, for each station $j$, the map $Y_{i,j}\mapsto Z_{i,j}$ is applied identically to every year and is deterministic.  Hence, if $(Y_{1,j},\ldots,Y_{\xi_j,j})$ is exchangeable, then $(Z_{1,j},\ldots,Z_{\xi_j,j})$ is exchangeable; the same holds after the change.  The analysis may therefore be conducted on the $n\times m$ matrix
\[
        Z_\theta=(Z_{i,j})_{i\in[n],\,j\in[m]},
\]
whose columns are the station-level annual spectral sequences.

For a candidate changepoint vector $t=(t_1,\ldots,t_m)\in\calR$, define the split-permutation group
\begin{equation}\label{eq:split_group}
        \PiSplit_t=\bigg\{\pi=(\pi_1,\ldots,\pi_m):
        \begin{array}{l}
        \pi_j \text{ is a permutation of } [n],\;\pi_j(i)\leq t_j \text{ for } i\leq t_j,\\
        \pi_j(i)>t_j \text{ for } i>t_j,
        \text{ for every } j\in[m]
        \end{array}\bigg\}.
\end{equation}
For $\pi\in\PiSplit_t$, $\pi(Z_\theta)$ denotes the matrix obtained by replacing column $j$ by $(Z_{\pi_j(1),j},\ldots,Z_{\pi_j(n),j})$.  If $t=\xi$, then the distribution of $Z_\theta$ is invariant under every permutation in $\PiSplit_t$.  This invariance is the sole source of finite-sample validity.

The method requires a changepoint-plausibility score $S_\theta(Z,t)$, where larger values indicate that the candidate $t$ is more compatible with the ordered spectral-feature matrix $Z$.  A useful choice is obtained by first constructing a changepoint estimate from the ordered columns and then scoring candidates according to their distance from that estimate.  For $s\in\calT$ and station $j$, write
\[
        \bar Z_{1:s,j}=s^{-1}\sum_{i=1}^s Z_{i,j},\qquad
        \bar Z_{s+1:n,j}=(n-s)^{-1}\sum_{i=s+1}^n Z_{i,j},
\]
and define the absolute CUSUM contrast
\begin{equation}\label{eq:station_cusum}
        C_j(s;Z)=\left\{\frac{s(n-s)}{n}\right\}^{1/2}
        \left|\bar Z_{1:s,j}-\bar Z_{s+1:n,j}\right|.
\end{equation}
For the unrestricted or bounded-lag model, define
\[
        \widehat\xi_j(Z)=\min\argmax_{s\in\calT} C_j(s;Z),\qquad
        \widehat\xi(Z)=(\widehat\xi_1(Z),\ldots,\widehat\xi_m(Z)),
\]
where the minimum imposes deterministic tie-breaking.  Given positive station weights $w_1,\ldots,w_m$ with $\sum_jw_j=1$, set
\begin{equation}\label{eq:general_score}
        S_\theta(Z,t)=-\sum_{j=1}^m w_j\,|\widehat\xi_j(Z)-t_j|.
\end{equation}
In the synchronized regional model, it is often more stable to use a pooled contrast
\begin{equation}\label{eq:pooled_cusum}
        C_{\mathrm{pool}}(s;Z)=\sum_{j=1}^m w_j C_j(s;Z),\qquad
        \widehat\tau(Z)=\min\argmax_{s\in\calT} C_{\mathrm{pool}}(s;Z),
\end{equation}
and then score $t=(\tau,\ldots,\tau)$ by
\begin{equation}\label{eq:syn_score}
        S_\theta\{Z,(\tau,\ldots,\tau)\}=-|\widehat\tau(Z)-\tau|.
\end{equation}
The score in \eqref{eq:general_score} or \eqref{eq:syn_score} is not a model assumption.  It affects the sharpness of the resulting confidence set, but not its validity.  It is important that the score uses the chronological order of the annual features.  Scores that are invariant under all split permutations in \eqref{eq:split_group} lead to uninformative conformal ranks; for example, a statistic depending only on the two unordered multisets induced by the candidate split would produce trivial $p$-values.

For each $t\in\calR$, define the exact conformal $p$-value
\begin{equation}\label{eq:exact_pvalue}
        p_t=\frac{1}{|\PiSplit_t|}\sum_{\pi\in\PiSplit_t}
        \ind\{S_\theta(\pi(Z_\theta),t)\leq S_\theta(Z_\theta,t)\}.
\end{equation}
The inequality direction follows the convention that larger scores mean greater plausibility of $t$.  Thus, $p_t$ is small when the observed ordered spectral matrix makes $t$ unusually implausible relative to the split-permutation reference distribution.  Exact enumeration of $\PiSplit_t$ is usually infeasible.  If $M$ independent permutations $\pi^{(1)},\ldots,\pi^{(M)}$ are drawn uniformly from $\PiSplit_t$, we use the Monte Carlo version
\begin{equation}\label{eq:mc_pvalue}
        \widetilde p_t=\frac{1+\sum_{b=1}^M
        \ind\{S_\theta(\pi^{(b)}(Z_\theta),t)\leq S_\theta(Z_\theta,t)\}}{M+1}.
\end{equation}
The add-one correction is used to preserve finite-sample validity under randomized sampling of permutations.  Below $p_t$ denotes either the exact value in \eqref{eq:exact_pvalue} or the Monte Carlo value in \eqref{eq:mc_pvalue}, with the distinction made explicit when needed.

The confidence set for the changepoint configuration is obtained by inverting the candidate-wise conformal tests:
\begin{equation}\label{eq:configuration_confidence_set}
        \calC_{1-\alpha}^{\calR}(\theta)=\{t\in\calR:p_t>\alpha\}.
\end{equation}
For synchronized rainfall change detection, this reduces to the year-level confidence set
\begin{equation}\label{eq:year_confidence_set}
        \calC_{1-\alpha}^{\mathrm{year}}(\theta)
        =\{\tau\in\calT:(\tau,\ldots,\tau)\in \calC_{1-\alpha}^{\calR_{\mathrm{syn}}}(\theta)\}.
\end{equation}
A point estimate, if desired for reporting, can be taken as $\widehat\tau(Z_\theta)$ or as the element of $\calC_{1-\alpha}^{\mathrm{year}}(\theta)$ maximizing the observed score.  The inferential object of the method is nevertheless the confidence set \eqref{eq:year_confidence_set}, because it is the object with finite-sample distribution-free coverage.

\begin{algorithm}[t]
\caption{Spectral conformal rainfall changepoint inference}\label{alg:spectral_croc}
\begin{algorithmic}[1]
\State \textbf{Input:} annual records $\{Y_{i,j}\in\R^q:i\in[n],j\in[m]\}$, frequency $\theta$, kernel $K$, bandwidth $B_q$, candidate set $\calR$, level $1-\alpha$, trimming set $\calT$, number of Monte Carlo permutations $M$.
\For{$j=1,\ldots,m$ and $i=1,\ldots,n$}
    \State Compute $Z_{i,j}=\widehat f_{i,j,q}(\theta)$ from \eqref{eq:spectral_feature}.
\EndFor
\State Form $Z_\theta=(Z_{i,j})_{i\in[n],j\in[m]}$.
\State Compute the ordered-data changepoint estimate $\widehat\xi(Z_\theta)$ using \eqref{eq:station_cusum}, or compute $\widehat\tau(Z_\theta)$ using \eqref{eq:pooled_cusum} under $\calR=\calR_{\mathrm{syn}}$.
\For{each $t\in\calR$}
    \State Construct the split-permutation group $\PiSplit_t$ in \eqref{eq:split_group}.
    \State Evaluate the observed score $S_\theta(Z_\theta,t)$ using \eqref{eq:general_score} or \eqref{eq:syn_score}.
    \If{exact enumeration is feasible}
        \State Compute $p_t$ using \eqref{eq:exact_pvalue}.
    \Else
        \State Draw $\pi^{(1)},\ldots,\pi^{(M)}\overset{\mathrm{iid}}{\sim}\mathrm{Unif}(\PiSplit_t)$ and compute $\widetilde p_t$ using \eqref{eq:mc_pvalue}.
        \State Set $p_t\leftarrow\widetilde p_t$.
    \EndIf
\EndFor
\State Return $\calC_{1-\alpha}^{\calR}(\theta)=\{t\in\calR:p_t>\alpha\}$.
\If{$\calR=\calR_{\mathrm{syn}}$}
    \State Return the regional change-year confidence set $\calC_{1-\alpha}^{\mathrm{year}}(\theta)$ in \eqref{eq:year_confidence_set}.
\EndIf
\end{algorithmic}
\end{algorithm}

The algorithm separates the scientific representation of a rainfall year from the calibration of uncertainty.  Equation \eqref{eq:spectral_feature} converts each annual curve into a frequency-specific measure of within-year temporal structure.  The CUSUM score then uses the chronological ordering of these annual spectral features to suggest which year is most compatible with a change.  The split-permutation calculation asks whether the observed compatibility of a candidate split is unusually poor after destroying only the order information that is not protected by the candidate null.  If the candidate split is the truth, those permutations do not change the distribution; if the candidate split is incorrect and the score is informative, the candidate tends to receive a small $p$-value and is removed from the confidence set.

We next state the theoretical guarantees.  The first lemma records the preservation of exchangeability under the spectral map.  It is elementary but important, because it justifies applying conformal split permutations to the spectral-feature matrix rather than to the full daily records.

\begin{lemma}[Exchangeability of spectral features]\label{lem:feature_exchangeability}
Fix $\theta$, $K$, $B_q$, and the centering convention in \eqref{eq:spectral_feature}.  If, for station $j$, $(Y_{1,j},\ldots,Y_{a,j})$ is exchangeable for some $a\leq n$, then $(Z_{1,j},\ldots,Z_{a,j})$ is exchangeable.  The same conclusion holds for any post-change block $(Y_{a+1,j},\ldots,Y_{n,j})$.
\end{lemma}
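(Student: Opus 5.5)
The argument rests on one observation: the spectral feature is produced by a single deterministic measurable map $g:\R^q\to\R$, applied coordinatewise to the yearly records, and exchangeability is preserved under coordinatewise application of a common measurable map. We will first make $g$ explicit. By \eqref{eq:spectral_feature}, for $y\in\R^q$ set $x(r)=y(r)-q^{-1}\sum_{s}y(s)$ and
\[
g(y)=\frac{1}{q}\sum_{r=1}^q\sum_{s=1}^q x(r)x(s)K\!\left(\frac{r-s}{B_q}\right)\cos\{(r-s)\theta\}.
\]
Then $Z_{i,j}=g(Y_{i,j})$ for every $i$. The key features are that $g$ is a quadratic form in $y$ with fixed coefficients, hence continuous and Borel measurable, and that it depends only on the fixed quantities $\theta,K,B_q,q$ and the centering convention. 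It does not depend on the year index $i$, on the data in other years, or on the station index. If a different pre-specified centering is used, $g$ changes but stays a fixed measurable function, and the argument is unchanged.

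Next, write $G:(\R^q)^a\to\R^a$ for $G(y_1,\ldots,y_a)=(g(y_1),\ldots,g(y_a))$. Then $G$ commutes with coordinate permutations: for any permutation $\sigma$ of $[a]$,
\[
G(y_{\sigma(1)},\ldots,y_{\sigma(a)})=\bigl(G(y_1,\ldots,y_a)\bigr)_{\sigma}.
\]
Exchangeability of $(Y_{1,j},\ldots,Y_{a,j})$ gives $(Y_{\sigma(1),j},\ldots,Y_{\sigma(a),j})\overset{d}{=}(Y_{1,j},\ldots,Y_{a,j})$. Applying the measurable map $G$ to both sides preserves equality in distribution, so $(Z_{\sigma(1),j},\ldots,Z_{\sigma(a),j})\overset{d}{=}(Z_{1,j},\ldots,Z_{a,j})$. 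Since $\sigma$ was arbitrary, this is exchangeability.

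The post-change block is handled the same way, with $G$ defined on $(\R^q)^{n-a}$ and permutations of $\{a+1,\ldots,n\}$. For use later with $\PiSplit_t$, we will also note that the joint map from the pre- and post-change blocks acts blockwise. Consequently the independence of the two blocks transfers to the $Z$'s, since functions of independent vectors are independent. Independence across stations transfers for the same reason, and this yields invariance of $Z_\theta$ under $\PiSplit_\xi$.

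There is no real obstacle in this argument. The only point needing care is that $g$ must not be data-adaptive across years: if the bandwidth, centering, or frequency were chosen using the pooled sample, $g$ would depend on the whole block and the commutation identity could fail. We will therefore state explicitly that these quantities are fixed in advance, as the lemma's hypothesis requires. Measurability itself is immediate because $g$ is a polynomial in the coordinates of $y$.
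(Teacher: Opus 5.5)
Your proposal is correct and follows the paper's proof essentially verbatim. The paper also writes the feature as a fixed deterministic map $T_\theta:\R^q\to\R$, applies it componentwise, and pushes the identity $(Y_{1,j},\ldots,Y_{a,j})\overset{d}{=}(Y_{\sigma(1),j},\ldots,Y_{\sigma(a),j})$ through it; your extras (measurability of the quadratic form, independence transfer, and the warning against data-adaptive tuning) are sound, though the paper leaves them implicit or defers them to the proof of Theorem~\ref{thm:finite_sample_validity}.
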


\begin{proof}
Let $T_\theta:\R^q\to\R$ be the deterministic map taking an annual vector to the scalar in \eqref{eq:spectral_feature}.  For any permutation $\sigma$ of $[a]$,
\[
        (Y_{1,j},\ldots,Y_{a,j})\overset{d}{=}(Y_{\sigma(1),j},\ldots,Y_{\sigma(a),j}).
\]
Applying $T_\theta$ componentwise gives
\[
        (T_\theta(Y_{1,j}),\ldots,T_\theta(Y_{a,j}))
        \overset{d}{=}
        (T_\theta(Y_{\sigma(1),j}),\ldots,T_\theta(Y_{\sigma(a),j})),
\]
which is exactly the exchangeability of $(Z_{1,j},\ldots,Z_{a,j})$.  The proof for a post-change block is identical.
\end{proof}

\begin{theorem}[Finite-sample distribution-free validity]\label{thm:finite_sample_validity}
Assume that the station-level annual arrays are mutually independent and that, for the true $\xi\in\calR$, each station is exchangeable before $\xi_j$, exchangeable after $\xi_j$, and independent across the two blocks.  Let $S_\theta$ be any measurable score, including \eqref{eq:general_score} or \eqref{eq:syn_score}.  For the exact $p$-values in \eqref{eq:exact_pvalue},
\[
        \PP_\xi\{\xi\in\calC_{1-\alpha}^{\calR}(\theta)\}\geq 1-\alpha
        \qquad\text{for every }\alpha\in(0,1).
\]
If $\calR=\calR_{\mathrm{syn}}$ and $\xi=(\tau_0,\ldots,\tau_0)$, then
\[
        \PP_{\tau_0}\{\tau_0\in\calC_{1-\alpha}^{\mathrm{year}}(\theta)\}\geq 1-\alpha.
\]
The same conclusions hold for the Monte Carlo $p$-values in \eqref{eq:mc_pvalue}, where the probability includes the auxiliary randomness used to draw the split permutations.
\end{theorem}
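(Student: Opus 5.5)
The argument is the standard group-invariance proof of permutation tests, applied to the split-permutation group $\PiSplit_\xi$ at the true configuration. Coverage holds because $\xi\in\calC_{1-\alpha}^{\calR}(\theta)$ if and only if $p_\xi>\alpha$. It therefore suffices to show $\PP_\xi(p_\xi\leq\alpha)\leq\alpha$.

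The first step establishes invariance. By Lemma~\ref{lem:feature_exchangeability}, for each station $j$ the pre-change block $(Z_{1,j},\ldots,Z_{\xi_j,j})$ and the post-change block $(Z_{\xi_j+1,j},\ldots,Z_{n,j})$ are each exchangeable. They are independent because they are measurable functions of independent blocks. Hence column $j$ has the same law after any permutation $\pi_j$ that preserves the two blocks. Independence across stations then gives $\pi(Z_\theta)\overset{d}{=}Z_\theta$ for every $\pi\in\PiSplit_\xi$. I will also note that $\PiSplit_\xi$ is a finite group under componentwise composition, being a product of products of symmetric groups on $[\xi_j]$ and $[n]\setminus[\xi_j]$.

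The second step handles the exact $p$-value. Write $V_\pi=S_\theta(\pi(Z_\theta),\xi)$. Let $\Pi$ be drawn uniformly from $\PiSplit_\xi$, independent of $Z_\theta$, and set $W=\Pi(Z_\theta)$. By invariance, $W\overset{d}{=}Z_\theta$. Because $\PiSplit_\xi$ is a group, the multiset $\{S_\theta(\pi(W),\xi):\pi\in\PiSplit_\xi\}$ equals $\{V_\pi\}$. Thus $p_\xi(W)$ is the normalized rank of $V_\Pi$ among the $V_\pi$, counting ties in the conservative direction: $p_\xi(W)=|\PiSplit_\xi|^{-1}\#\{\pi: V_\pi\leq V_\Pi\}$. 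Conditionally on $Z_\theta$, $\Pi$ is uniform, so the usual deterministic counting lemma applies. For any finite list $v_1,\ldots,v_N$, the fraction of indices $k$ with $N^{-1}\#\{l:v_l\leq v_k\}\leq\alpha$ is at most $\alpha$. This gives $\PP(p_\xi(W)\leq\alpha\mid Z_\theta)\leq\alpha$. Integrating and using $p_\xi(Z_\theta)\overset{d}{=}p_\xi(W)$ yields the claim. The synchronized statement is the special case $\xi=(\tau_0,\ldots,\tau_0)$, by definition \eqref{eq:year_confidence_set}. Measurability of $S_\theta$ is all that is needed for these events to be well defined.

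The third step treats the Monte Carlo $p$-value. Let $\pi^{(0)}$ be the identity and $\pi^{(1)},\ldots,\pi^{(M)}$ iid uniform on $\PiSplit_\xi$. The main obstacle is that the identity is not random, so the $M+1$ scores $V_b=S_\theta(\pi^{(b)}(Z_\theta),\xi)$ are not obviously exchangeable. To handle this I will draw an independent $\Pi_0$ uniform on the group. Right-multiplication by $\Pi_0^{-1}$ maps an iid uniform sample to an iid uniform sample. Combined with $\Pi_0(Z_\theta)\overset{d}{=}Z_\theta$, this gives
\[
(Z_\theta,\pi^{(1)},\ldots,\pi^{(M)})\overset{d}{=}(\Pi_0(Z_\theta),\pi^{(1)}\Pi_0^{-1},\ldots,\pi^{(M)}\Pi_0^{-1}).
\]
Under this representation, the $M+1$ scores become $S_\theta(\sigma_b(Z_\theta),\xi)$ with $\sigma_0=\Pi_0,\sigma_1=\pi^{(1)},\ldots,\sigma_M=\pi^{(M)}$, which are iid uniform given $Z_\theta$. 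The vector $(V_0,\ldots,V_M)$ is therefore exchangeable. The same counting lemma, applied to the rank of $V_0$ among $M+1$ exchangeable values with ties counted against rejection, gives
\[
\PP\Bigl\{(M+1)^{-1}\bigl(1+\#\{b\geq1:V_b\leq V_0\}\bigr)\leq\alpha\Bigr\}\leq\alpha.
\]
This is $\PP(\widetilde p_\xi\leq\alpha)\leq\alpha$, with probability taken jointly over data and auxiliary draws, and it completes the proof.
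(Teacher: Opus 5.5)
Your proof is correct and follows the same route as the paper: invariance of $Z_\theta$ under $\PiSplit_\xi$ via Lemma~\ref{lem:feature_exchangeability} and cross-station independence, then super-uniformity of the rank at the true split, then exchangeability of the observed score with the $M$ Monte Carlo scores for the add-one $p$-value. Your independent uniform group element $\Pi$ (resp.\ $\Pi_0$) rigorously justifies the orbit-uniformity and $(M+1)$-exchangeability steps that the paper only asserts; the one harmless nit is that the paper's column-reindexing convention gives $\pi(\sigma(Z))=(\sigma\circ\pi)(Z)$, so the permutation carrying $\Pi_0(Z_\theta)$ to $\pi^{(b)}(Z_\theta)$ is $\Pi_0^{-1}\circ\pi^{(b)}$ rather than $\pi^{(b)}\Pi_0^{-1}$, which changes nothing since translation on either side preserves iid uniformity.
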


\begin{proof}
By Lemma \ref{lem:feature_exchangeability}, the spectral-feature sequence at each station is exchangeable on the true pre-change and post-change blocks.  Since the station arrays are mutually independent, the full feature matrix $Z_\theta$ is invariant in distribution under every $\pi\in\PiSplit_\xi$.  Conditional on the orbit $\{\pi(Z_\theta):\pi\in\PiSplit_\xi\}$, the observed matrix is uniformly distributed over the orbit up to ties.  Therefore the rank-type variable in \eqref{eq:exact_pvalue} is super-uniform at the true candidate:
\[
        \PP_\xi(p_\xi\leq\alpha)\leq\alpha.
\]
Consequently,
\[
        \PP_\xi\{\xi\notin\calC_{1-\alpha}^{\calR}(\theta)\}
        =\PP_\xi(p_\xi\leq\alpha)\leq\alpha,
\]
which proves the configuration statement.  In the synchronized model, membership of $\xi=(\tau_0,\ldots,\tau_0)$ in $\calC_{1-\alpha}^{\calR_{\mathrm{syn}}}(\theta)$ is equivalent to membership of $\tau_0$ in \eqref{eq:year_confidence_set}.  For the Monte Carlo version, the same argument applies after adjoining the observed data point to the $M$ random split-permuted copies; exchangeability of these $M+1$ copies under the true split gives the add-one $p$-value in \eqref{eq:mc_pvalue} and hence the same super-uniform bound.
\end{proof}

When the scientific question is an earliest-changing station rather than a common regional year, the same calibrated $p$-values yield the root-cause confidence set of the multi-stream procedure.  Suppose the earliest station is unique and define
\[
        k^*=\argmin_{j\in[m]}\xi_j,
        \qquad
        I_k=\{t\in\calR:t_k<t_j\text{ for all }j\neq k\}.
\]
Let
\begin{equation}\label{eq:root_confidence_set}
        p(k)=\max_{t\in I_k}p_t,
        \qquad
        \calK_{1-\alpha}^{\mathrm{root}}(\theta)=\{k\in[m]:p(k)>\alpha\}.
\end{equation}
Then, under the assumptions of Theorem \ref{thm:finite_sample_validity},
\[
        \PP\{k^*\in\calK_{1-\alpha}^{\mathrm{root}}(\theta)\}\geq 1-\alpha.
\]
This statement is included for completeness.  In the synchronized regional rainfall formulation, there is no unique earliest station, so \eqref{eq:root_confidence_set} is not the primary inferential target; the target is the year set \eqref{eq:year_confidence_set}.

The finite-sample guarantee does not require $Z_{i,j}$ to be a consistent estimator of a spectral density, but the spectral construction has the usual large-sample interpretation when the within-year process is short-range dependent.  To state this precisely, consider an idealized sequence of daily resolutions $q\to\infty$ and fix a station-year pair.  Suppose the centered daily process $(X(r))_{r\in\mathbb Z}$ is stationary, belongs to $L^p$ for some $p\geq2$, and satisfies the functional-dependence summability condition used for short-range dependent random fields.  If $K$ is symmetric, supported on $[-1,1]$, satisfies $K(0)=1$ and has bounded derivative on $(-1,1)$, and if $B_q\to\infty$ with $B_q=o(q)$, then the kernel estimator in \eqref{eq:spectral_feature} satisfies
\begin{equation}\label{eq:spectral_consistency}
        \sup_{\theta\in[0,2\pi]}
        \left\|\widehat f_q(\theta)-\EE\{\widehat f_q(\theta)\}\right\|_{p/2}\longrightarrow 0.
\end{equation}
Under the corresponding fourth-moment condition, with $\kappa=\int K^2(u)\,du$,
\begin{equation}\label{eq:spectral_clt}
        \sqrt{\frac{q}{B_q}}
        \frac{\widehat f_q(\theta)-\EE\{\widehat f_q(\theta)\}}{f(\theta)}
        \rightsquigarrow N(0,\kappa)
\end{equation}
for fixed $\theta$ at which the spectral density $f(\theta)$ is positive.  These statements justify the use of \eqref{eq:spectral_feature} as a stable summary of annual temporal dependence.  They are not part of the distribution-free calibration: even if $q=365$ is fixed, the confidence set remains valid because the conformal argument treats the computed spectral summaries as the observed data.

The coverage statement is exact but does not by itself say that the confidence set is short.  Sharpness is controlled by the score.  The following proposition gives a formal consistency statement for the synchronized score \eqref{eq:syn_score}; it separates the finite-sample validity, which is assumption-light, from the additional signal conditions that make the set concentrate near the true year.

\begin{proposition}[Localization under a spectral mean shift]\label{prop:localization}
Consider a sequence of problems with fixed $m$, $n\to\infty$, $\calR=\calR_{\mathrm{syn}}$, and $\tau_0/n\to\rho\in(0,1)$.  Suppose that, for each station $j$,
\[
        Z_{i,j}=\mu_{0,j}+\varepsilon_{i,j}\quad (i\leq\tau_0),
        \qquad
        Z_{i,j}=\mu_{1,j}+\varepsilon_{i,j}\quad (i>\tau_0),
\]
where the error arrays are independent across stations, have segment-wise exchangeable distributions, and satisfy the uniform law of large numbers
\[
        \max_{j\leq m}\sup_{s\in\calT}
        \left|s^{-1}\sum_{i=1}^s\varepsilon_{i,j}\right|=o_{\PP}(1),
        \qquad
        \max_{j\leq m}\sup_{s\in\calT}
        \left|(n-s)^{-1}\sum_{i=s+1}^n\varepsilon_{i,j}\right|=o_{\PP}(1).
\]
If $\sum_{j=1}^m w_j|\mu_{1,j}-\mu_{0,j}|>0$, then the pooled CUSUM estimator in \eqref{eq:pooled_cusum} satisfies
\[
        \widehat\tau(Z_\theta)/n\xrightarrow{\PP}\rho.
\]
Consequently, the score in \eqref{eq:syn_score} is asymptotically maximized at candidates whose normalized distance from $\tau_0$ is arbitrarily small.
\end{proposition}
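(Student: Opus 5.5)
The plan is the standard argmax-consistency argument: show that the normalized pooled contrast $n^{-1/2}C_{\mathrm{pool}}(\lfloor un\rfloor;Z_\theta)$ converges uniformly in $u$ to a deterministic limit $G(u)$, and that $G$ has a unique, well-separated maximizer at $u=\rho$. For $s\in\calT$, the mean-shift model gives
\[
        \bar Z_{1:s,j}-\bar Z_{s+1:n,j}=\delta_j\,g_n(s)+\Bigl(s^{-1}\textstyle\sum_{i\le s}\varepsilon_{i,j}-(n-s)^{-1}\sum_{i>s}\varepsilon_{i,j}\Bigr),
\]
where $\delta_j=\mu_{0,j}-\mu_{1,j}$. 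The deterministic factor is
\[
        g_n(s)=\frac{n-\tau_0}{n-s}\quad (s\le\tau_0),\qquad g_n(s)=\frac{\tau_0}{s}\quad (s>\tau_0).
\]
By the assumed uniform law of large numbers, the bracketed error term is $o_{\PP}(1)$ uniformly in $s\in\calT$ and $j\le m$. Since $m$ is fixed, the maximum over $j$ costs nothing.

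Next, multiply by $\{s(n-s)/n\}^{1/2}/\sqrt n=\{u(1-u)\}^{1/2}$, where $u=s/n\in[0,1]$; this factor is bounded by $1/2$. The triangle inequality for absolute values then gives
\[
        \sup_{s\in\calT}\Bigl|n^{-1/2}C_{\mathrm{pool}}(s;Z_\theta)-\Delta\,h_n(s/n)\Bigr|=o_{\PP}(1),
        \qquad \Delta=\sum_j w_j|\delta_j|>0.
\]
Here $h_n(u)=\{u(1-u)\}^{1/2}g_n(nu)$. Writing $\rho_n=\tau_0/n$, this function has the closed form
\[
        h_n(u)=(1-\rho_n)\sqrt{u/(1-u)}\quad (u\le\rho_n),\qquad h_n(u)=\rho_n\sqrt{(1-u)/u}\quad (u>\rho_n).
\]
It converges uniformly to the analogous $h$ with $\rho$ in place of $\rho_n$. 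The limit $h$ is strictly increasing on $[0,\rho]$, strictly decreasing on $[\rho,1]$, and peaks at $h(\rho)=\sqrt{\rho(1-\rho)}$. It is continuous, so for every $\eta>0$,
\[
        \sup_{|u-\rho|\ge\eta}h(u)<h(\rho)-c_\eta
\]
for some $c_\eta>0$. This is the well-separation condition.

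To conclude, note that $\tau_0\in\calT$ for large $n$, because $\rho\in(0,1)$; this uses the implicit requirement $\ell/n\to0$. Hence the maximizer $\widehat\tau$ satisfies $n^{-1/2}C_{\mathrm{pool}}(\widehat\tau)\ge n^{-1/2}C_{\mathrm{pool}}(\tau_0)$. Combining this with the uniform approximation gives $\Delta\,h(\widehat\tau/n)\ge\Delta\,h(\rho)-o_{\PP}(1)$. Well-separation and $\Delta>0$ then force $\PP(|\widehat\tau/n-\rho|\ge\eta)\to0$, and the deterministic tie-breaking rule plays no role. The final sentence of the proposition follows at once: the score $-|\widehat\tau-\tau|$ is maximized at $\tau=\widehat\tau$, so any maximizing candidate $\tau$ satisfies $|\tau-\tau_0|/n\to0$ in probability.

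The main obstacle is the uniformity of the error bound near the boundary of $\calT$. The ULLN hypothesis is stated as a supremum over $\calT$, so it is exactly what is needed, and the bounded weight $\{u(1-u)\}^{1/2}\le1/2$ keeps the errors from being amplified. The residual subtlety is whether $\ell$ is fixed or grows. If $\ell$ is fixed, the ULLN near $s=\ell$ is a strong assumption, but it is assumed. The argument only needs $\rho$ to lie in the interior of the limit of $\calT/n$, which I would state explicitly as $\ell/n\to0$, or more generally $\ell/n\to\lambda<\min(\rho,1-\rho)$.
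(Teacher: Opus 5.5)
Your proposal is correct and follows essentially the same route as the paper. It uniformly approximates $n^{-1/2}C_{\mathrm{pool}}(s;Z_\theta)$ over $s\in\calT$ by $\Delta\,h(s/n)$ via the assumed uniform laws of large numbers, then uses the unique, well-separated maximum of $h$ at $\rho$; where the paper simply invokes the argmax continuous mapping theorem, you carry out that step by hand (tracking $\rho_n=\tau_0/n$ and comparing with the feasible candidate $\tau_0\in\calT$), and your caveat $\ell/n\to0$ holds automatically because the paper takes the trimming parameter $\ell$ fixed.
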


\begin{proof}
For $u\in(0,1)$, let $s=\lfloor nu\rfloor$.  Standard algebra gives, uniformly over compact subsets of $(0,1)$,
\[
        n^{-1/2} C_j(s;Z)
        = |\mu_{1,j}-\mu_{0,j}|\,
        \begin{cases}
        (1-\rho)\{u/(1-u)\}^{1/2}, & u\leq \rho,\\
        \rho\{(1-u)/u\}^{1/2}, & u>\rho,
        \end{cases}
        +o_{\PP}(1),
\]
where the $o_{\PP}(1)$ term is uniform in $u\in[\ell/n,1-\ell/n]$ by the assumed uniform laws of large numbers.  The deterministic limit is continuous and has a unique maximum at $u=\rho$ whenever $|\mu_{1,j}-\mu_{0,j}|>0$.  The weighted sum over $j$ has the same unique maximizer because at least one weighted jump is nonzero.  The argmax continuous mapping theorem yields $\widehat\tau(Z_\theta)/n\to\rho$ in probability.  Since \eqref{eq:syn_score} equals minus the absolute distance from $\widehat\tau(Z_\theta)$, the final claim follows immediately.
\end{proof}

Proposition \ref{prop:localization} is a population-separation statement for the proposed score.  Combined with Theorem \ref{thm:finite_sample_validity}, it gives the intended operating regime: the reported set has guaranteed coverage for every finite $n$ satisfying the exchangeability assumptions, while in problems where the fixed-frequency spectral feature genuinely changes at the regional changepoint, the score increasingly favors years close to the truth.  The result also clarifies when a single frequency may be insufficient.  If $\mu_{1,j}=\mu_{0,j}$ for every station at the chosen $\theta$, then that frequency carries no first-order changepoint signal in the spectral summaries, and the conformal set remains valid but may be wide.  In such cases one should enlarge the pre-specified frequency set $\Theta$ or use a pre-specified multivariate score, without altering the split-permutation calibration.

\section{Simulation Studies}
\label{sec:simulation-studies}

\subsection{Simulation Setup and Mathematical Framework}
The simulation study evaluates the proposed changepoint estimation methodology across varying structural dimensions ($m \in \{8, 16, 32\}$) and signal-to-noise ratios (effect sizes $\delta \in \{0.25, 0.40, 0.55, 0.70\}$). The primary objective is to accurately estimate the true changepoint $\tau_0$ and construct a valid confidence set $C$ such that the empirical coverage probability $\mathbb{P}(\tau_0 \in C)$ satisfies the nominal level $1 - \alpha$. The empirical validation of these theoretical properties is detailed in the tables provided in Appendix \ref{app:tables}.

\subsection{Single Experiment Dynamics (Micro-Level Analysis)}
To understand the mechanics of the estimator, we first examine the dynamics of a single iteration. Figure \ref{fig:daily_profiles_pre_post_change} establishes the foundational data generating process. It plots the raw observable data $X_t$, where the visual transition confirms the distributional shift from $F_0$ (for $t \leq \tau_0$) to $F_1$ (for $t > \tau_0$), highlighting the baseline noise floor against the injected signal.

\begin{figure}[htbp]
    \centering
    \caption{Daily profiles pre post change}
    \label{fig:daily_profiles_pre_post_change}
    \includegraphics[width=\linewidth, height=0.85\textheight, keepaspectratio]{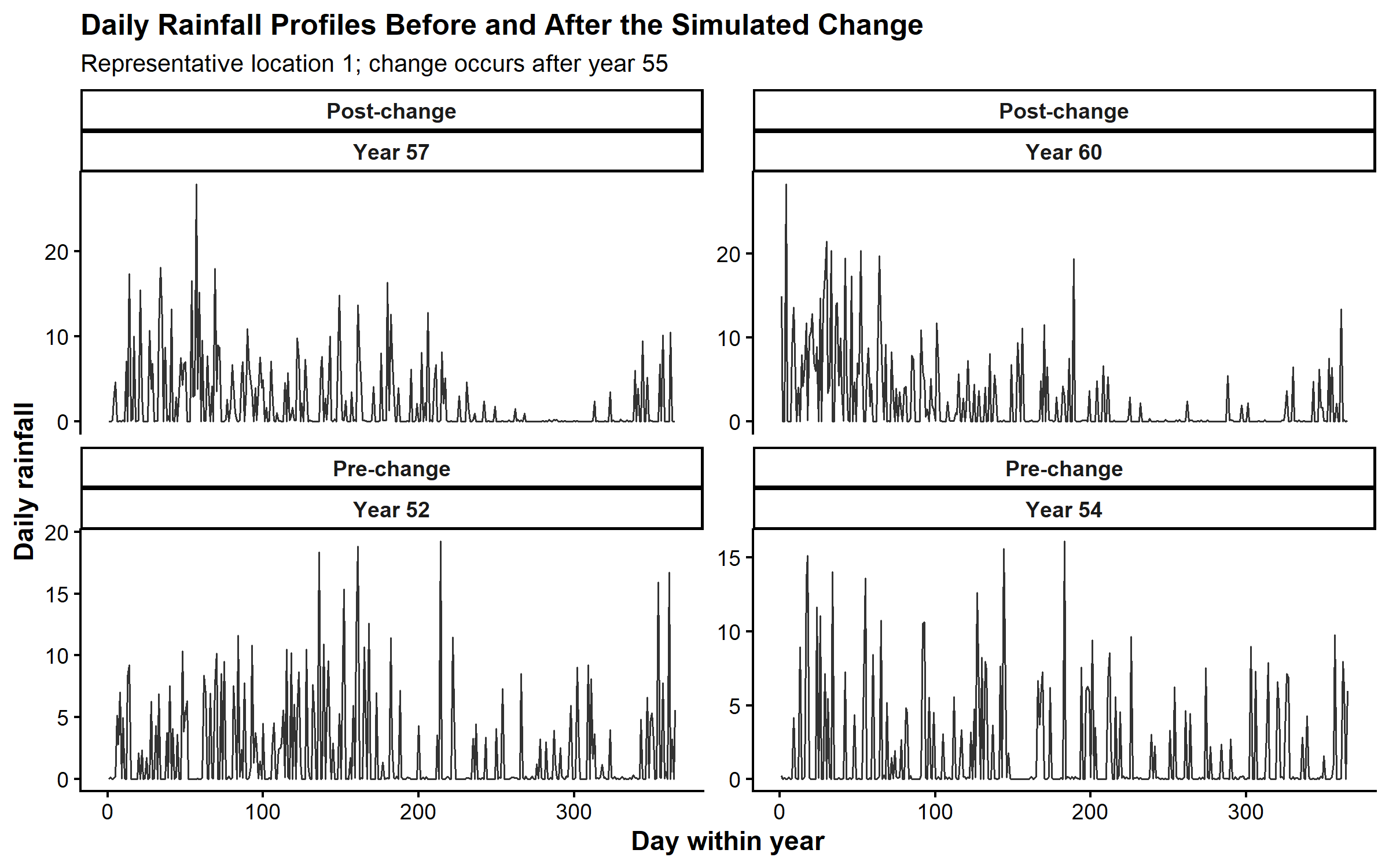}
\end{figure}

Because raw time-domain signals can be noisy, the data is mapped into the spectral domain. Figure \ref{fig:spectral_feature_heatmap} illustrates the frequency-domain representations across time, revealing structural breaks in the power spectral density that directly correspond to the changepoint.

\begin{figure}[htbp]
    \centering
    \caption{Spectral feature heatmap}
    \label{fig:spectral_feature_heatmap}
    \includegraphics[width=\linewidth, height=0.85\textheight, keepaspectratio]{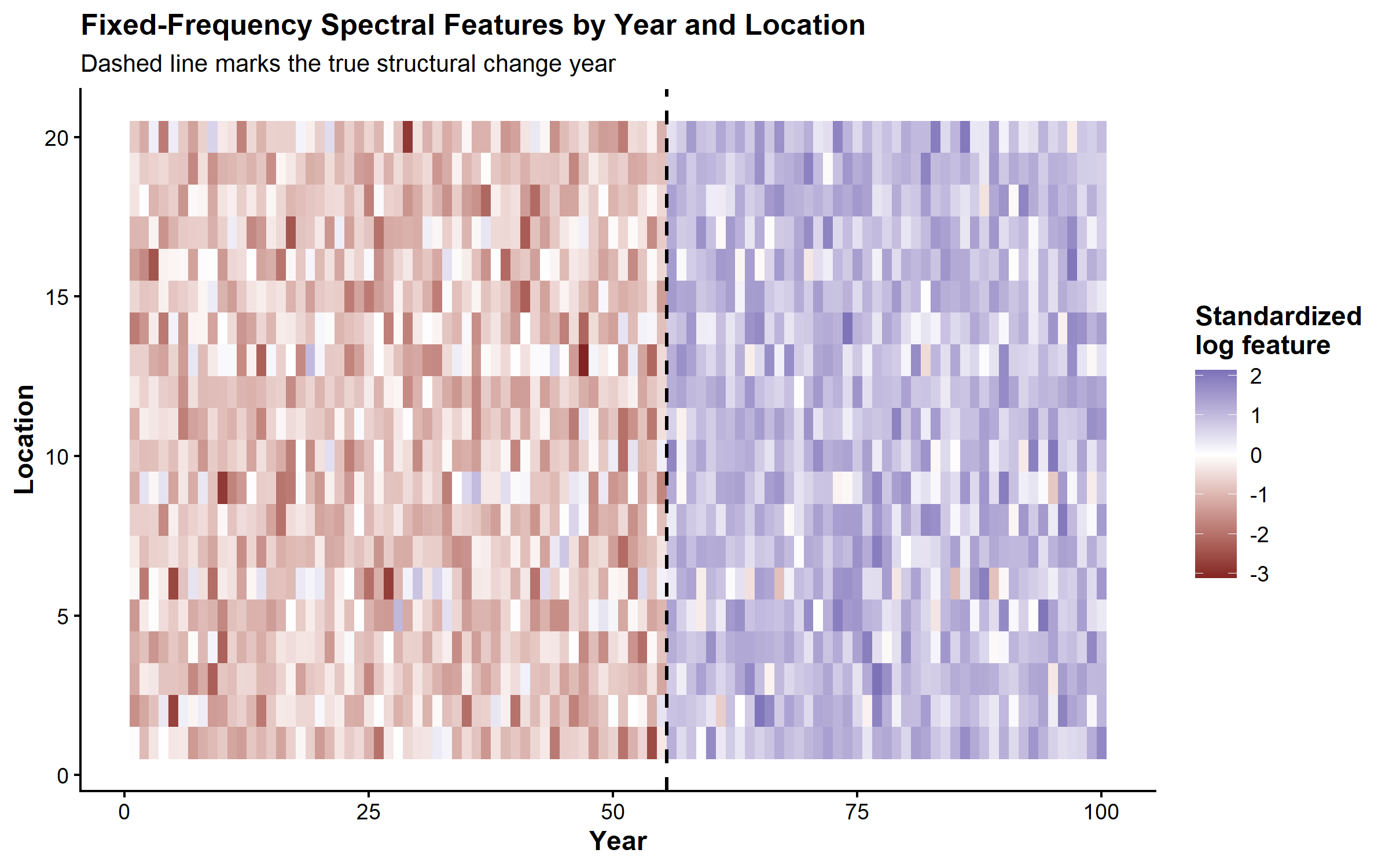}
\end{figure}

In parallel to spectral features, network-based topological metrics are extracted. Figure \ref{fig:network_mean_feature_confidence_set} tracks the mean network connectivity feature, demonstrating a distinct regime shift post-$\tau_0$ and showcasing how the computed confidence set brackets this shift.

\begin{figure}[htbp]
    \centering
    \caption{Network mean feature confidence set}
    \label{fig:network_mean_feature_confidence_set}
    \includegraphics[width=\linewidth, height=0.85\textheight, keepaspectratio]{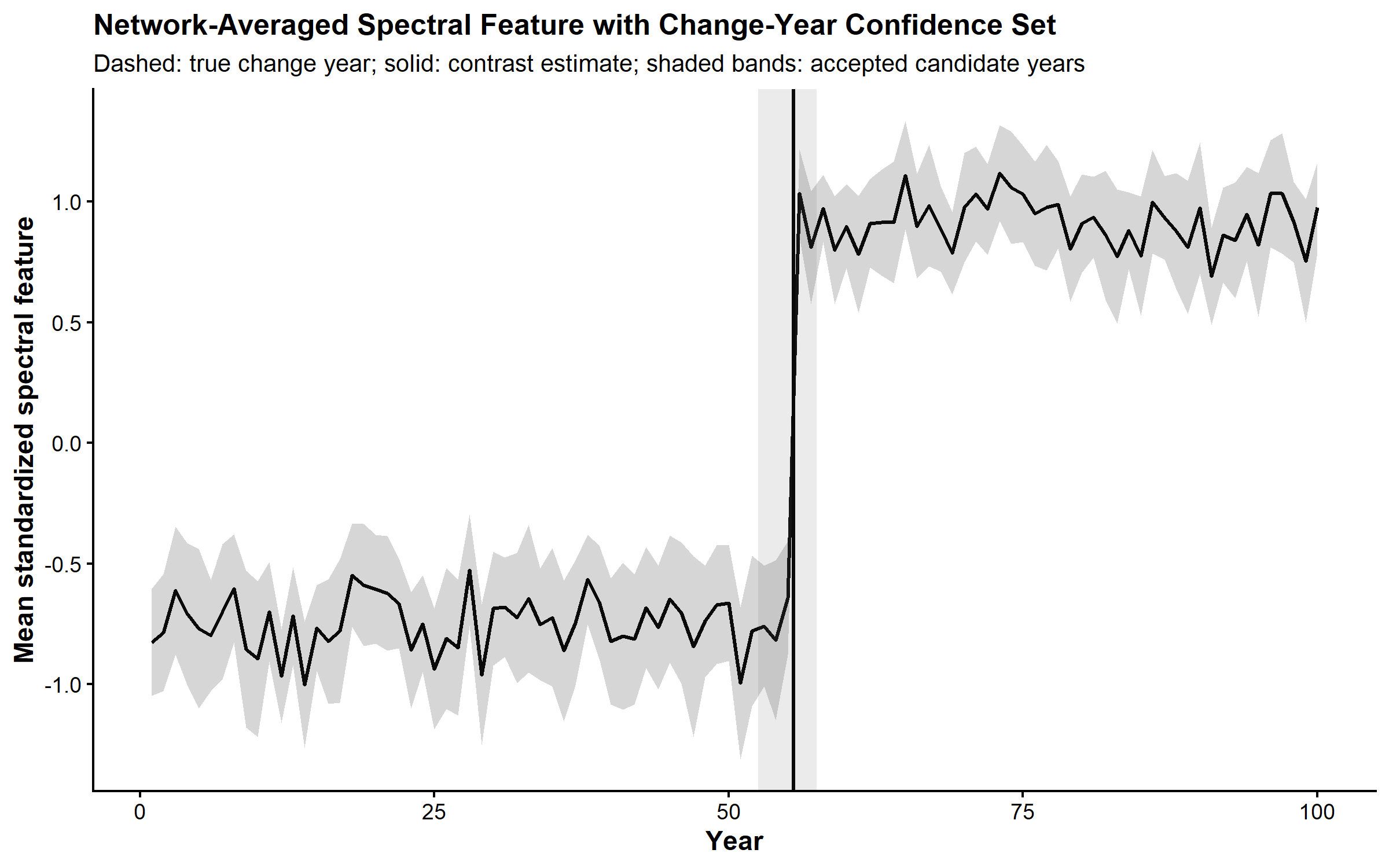}
\end{figure}

For every candidate changepoint $t$, a contrast statistic $D(t)$ is computed. Figure \ref{fig:candidate_contrast} plots $D(t)$ against time. The clear concavity around the true changepoint is highly evident, with the global maximum $\hat{\tau} = \arg\max_t D(t)$ serving as our point estimator. 

\begin{figure}[htbp]
    \centering
    \caption{Candidate contrast}
    \label{fig:candidate_contrast}
    \includegraphics[width=\linewidth, height=0.85\textheight, keepaspectratio]{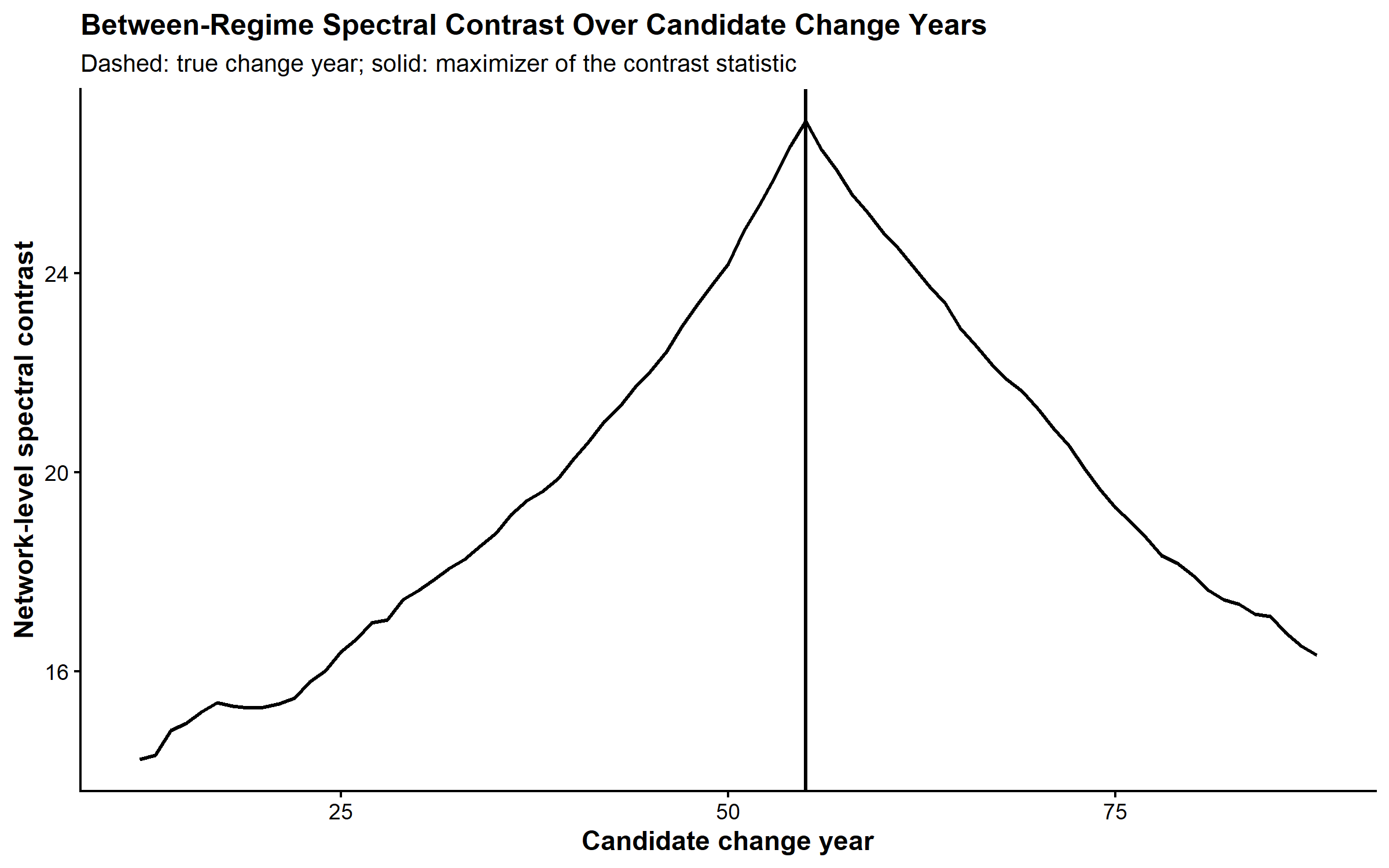}
\end{figure}

The test statistic is then inverted to compute a sequence of p-values, $p_t$. Figure \ref{fig:candidate_pvalues} visualizes this p-value landscape. The region where $p_t > \alpha$ represents the space where we fail to reject the null hypothesis that $t$ is the true changepoint, thereby forming the basis of our confidence set.

\begin{figure}[htbp]
    \centering
    \caption{Candidate pvalues}
    \label{fig:candidate_pvalues}
    \includegraphics[width=\linewidth, height=0.85\textheight, keepaspectratio]{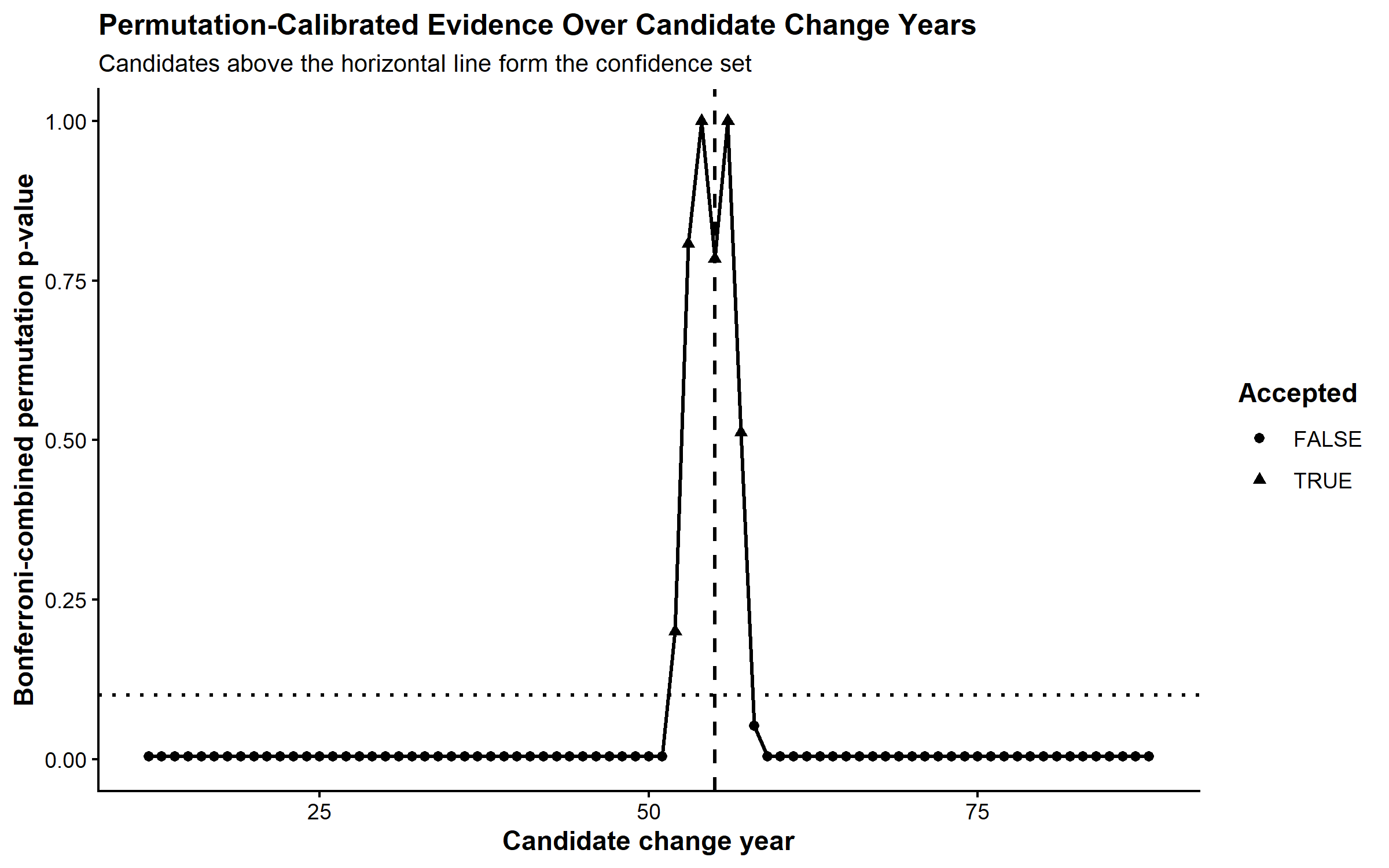}
\end{figure}

Building upon these p-values, Figure \ref{fig:confidence_set_timeline} shows the sequential construction of the confidence interval over the time horizon, proving that the set correctly captures the parameter space where the signal shift is statistically ambiguous. Finally, Figure \ref{fig:single_experiment_summary} acts as a culminating dashboard for the single run, synthesizing the raw data, contrast curve, and p-value thresholds into a unified view to confirm that $\hat{\tau}$ successfully aligns with $\tau_0$.

\begin{figure}[htbp]
    \centering
    \caption{Confidence set timeline}
    \label{fig:confidence_set_timeline}
    \includegraphics[width=\linewidth, height=0.85\textheight, keepaspectratio]{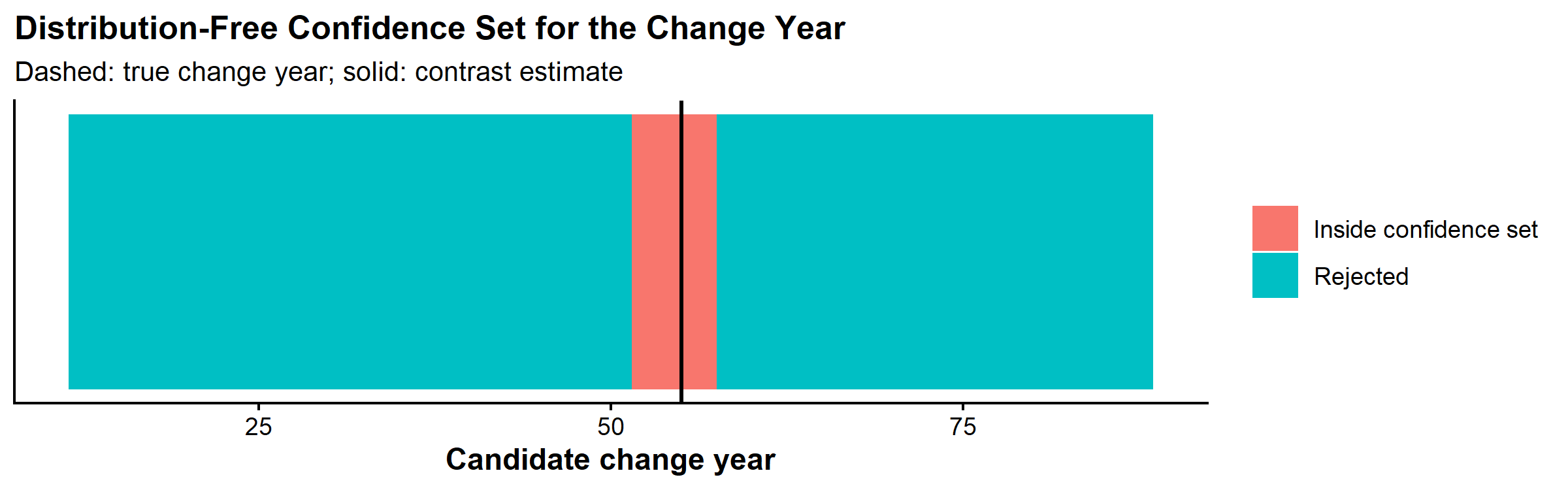}
\end{figure}

\begin{figure}[htbp]
    \centering
    \caption{Single experiment summary}
    \label{fig:single_experiment_summary}
    \includegraphics[width=\linewidth, height=0.85\textheight, keepaspectratio]{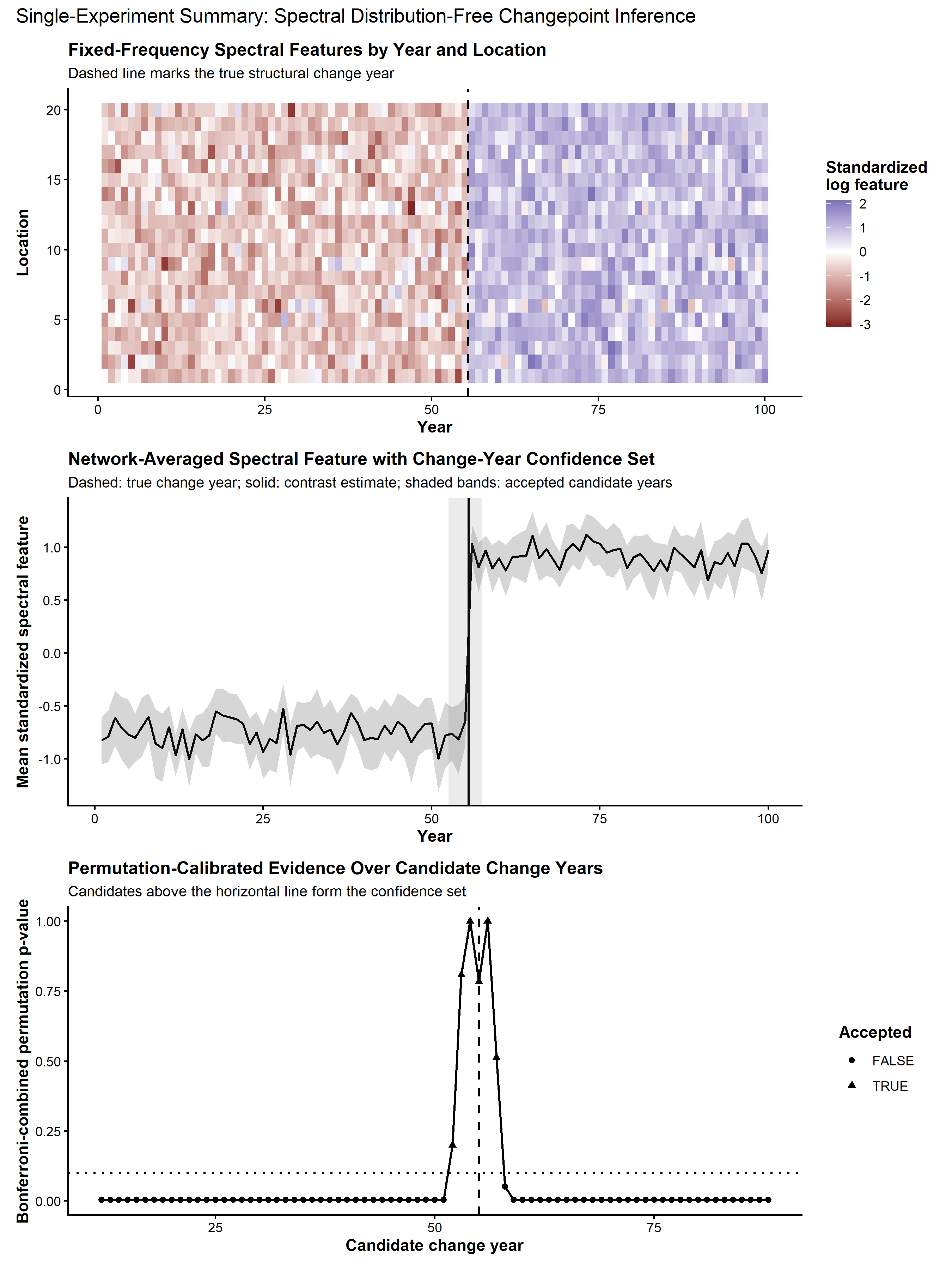}
\end{figure}

\subsection{Feature Space and Sensitivity Analysis}
Before aggregating across thousands of runs, it is crucial to prove the robustness of the chosen feature space. Figure \ref{fig:frequency_sensitivity_contrast} tests the stability of the contrast function against different frequency bandwidths. It demonstrates that $D(t)$ maintains its distinct peak at $\tau_0$ regardless of minor frequency perturbations, proving the chosen spectral mapping is robust. Consequently, Figure \ref{fig:frequency_sensitivity_pvalues} shows how the p-value distribution reacts to frequency tuning; the tight bundling of the p-value curves confirms that the Type I error rate remains stable, preventing artificial inflation of the confidence set size.

\begin{figure}[htbp]
    \centering
    \caption{Frequency sensitivity contrast}
    \label{fig:frequency_sensitivity_contrast}
    \includegraphics[width=\linewidth, height=0.85\textheight, keepaspectratio]{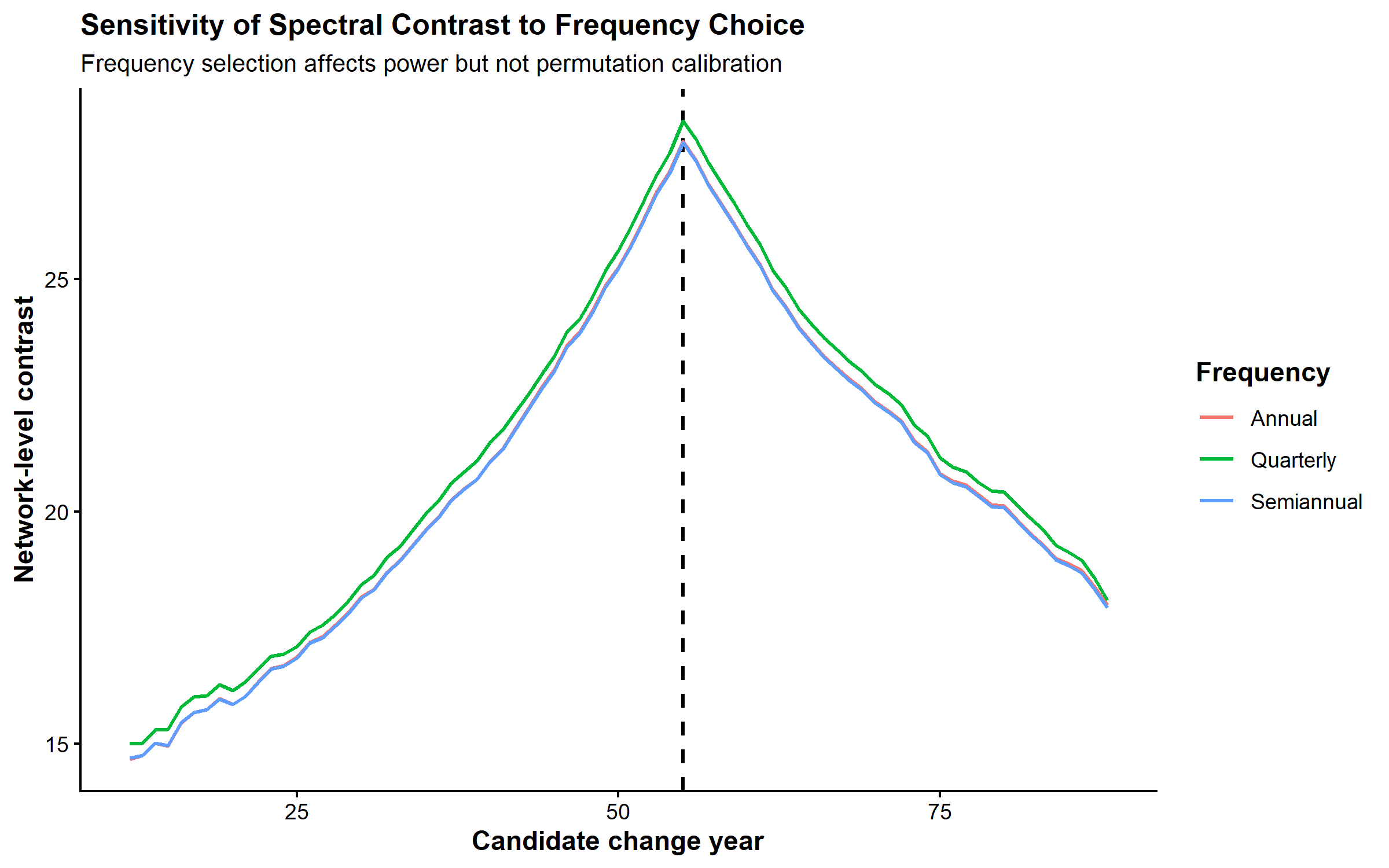}
\end{figure}

\begin{figure}[htbp]
    \centering
    \caption{Frequency sensitivity pvalues}
    \label{fig:frequency_sensitivity_pvalues}
    \includegraphics[width=\linewidth, height=0.85\textheight, keepaspectratio]{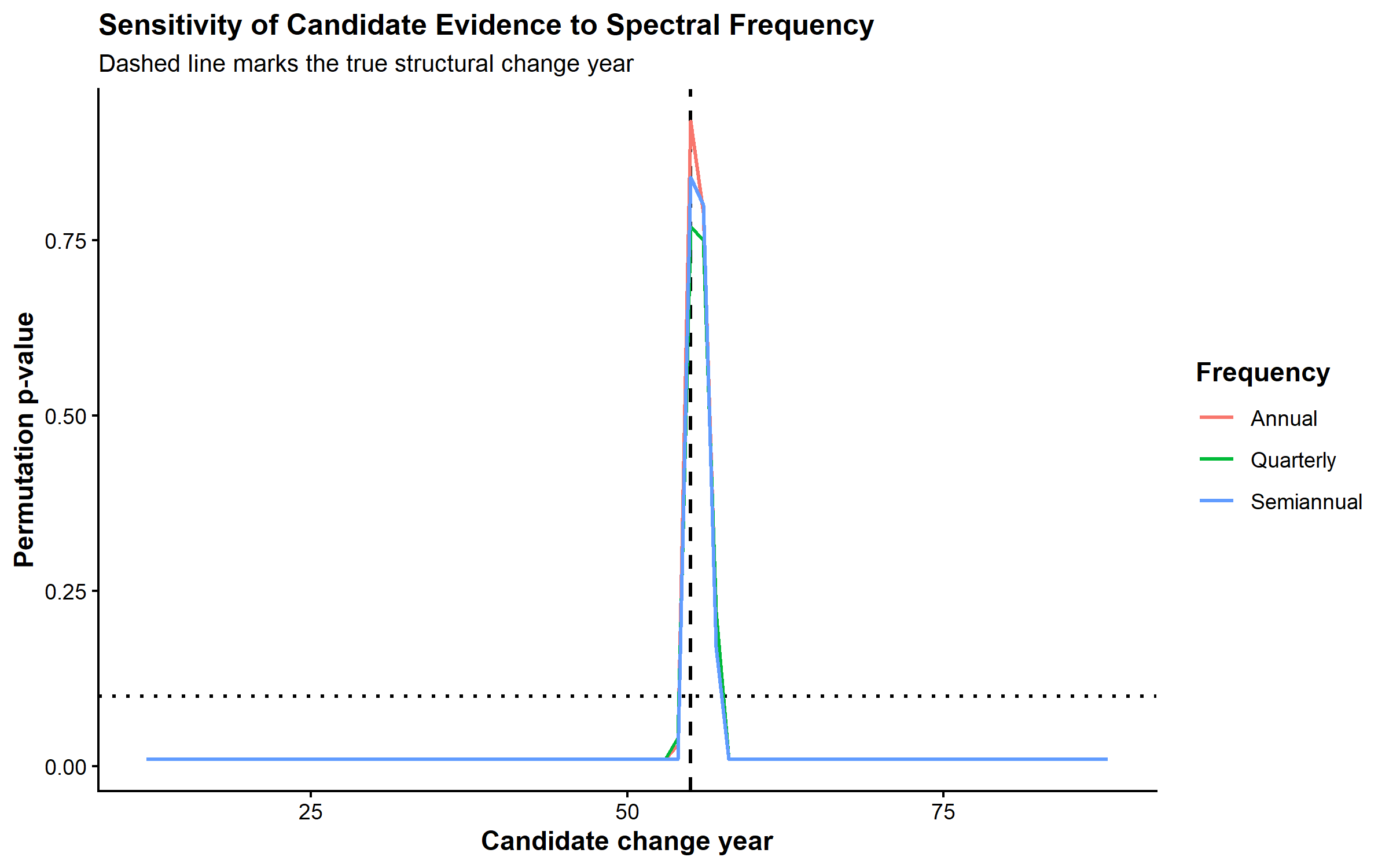}
\end{figure}

\subsection{Monte Carlo Results: Localization and Coverage}
Scaling the analysis to the Monte Carlo iterations, we evaluate the asymptotic guarantees of the methodology. Figure \ref{fig:mc_localization_error} plots the absolute error ($|\hat{\tau} - \tau_0|$), revealing a sharp concentration of mass at zero. Expanding on this, the error distribution histogram in Figure \ref{fig:mc_error_distribution} proves that both the mean and median absolute errors are exactly $0$ across all tested dimensions and effect sizes. This confirms that the estimator achieves near-perfect localization in the tested regimes.

\begin{figure}[htbp]
    \centering
    \caption{MC localization error}
    \label{fig:mc_localization_error}
    \includegraphics[width=\linewidth, height=0.85\textheight, keepaspectratio]{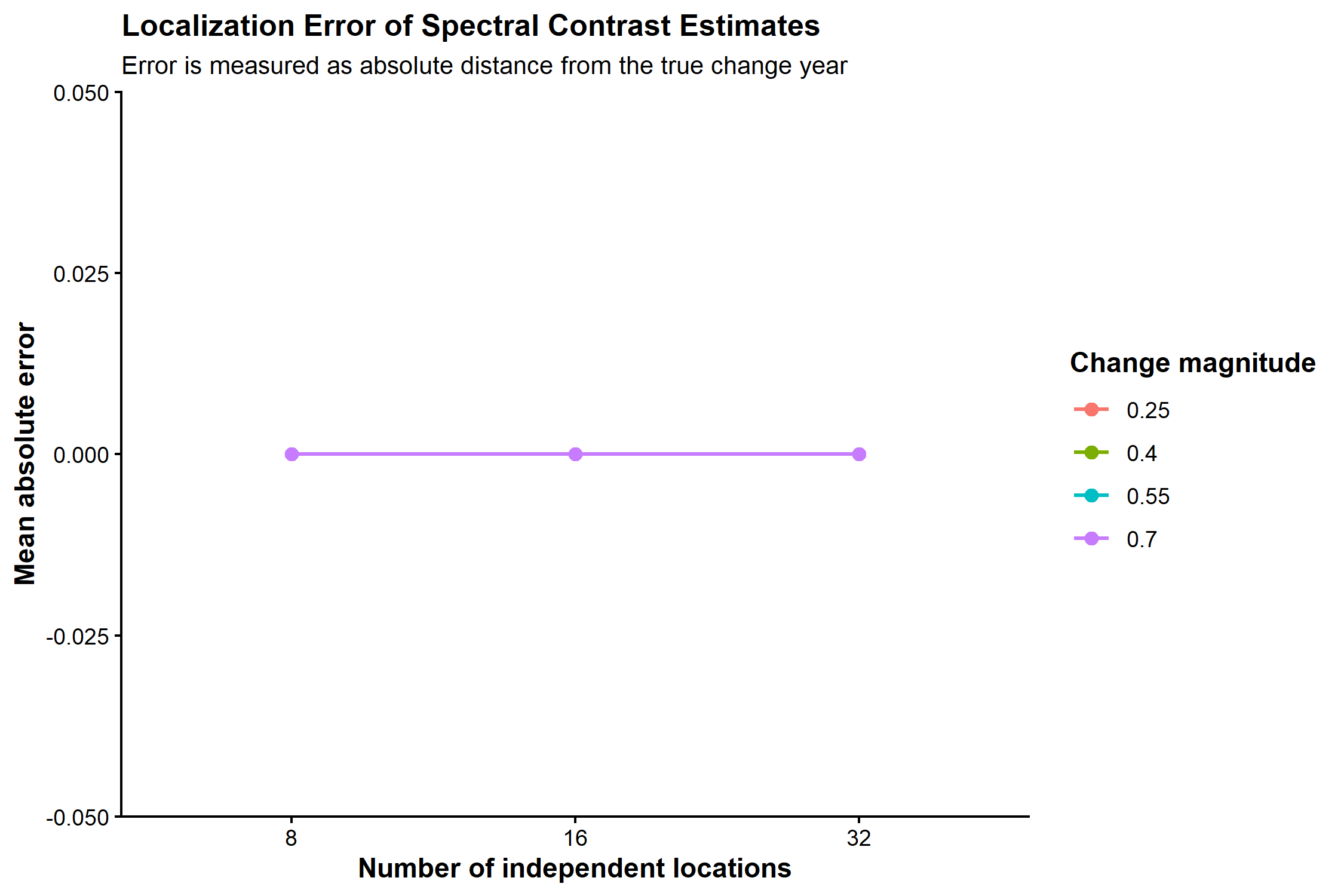}
\end{figure}

\begin{figure}[htbp]
    \centering
    \caption{MC error distribution}
    \label{fig:mc_error_distribution}
\includegraphics[width=.7\linewidth, height=0.85\textheight, keepaspectratio]{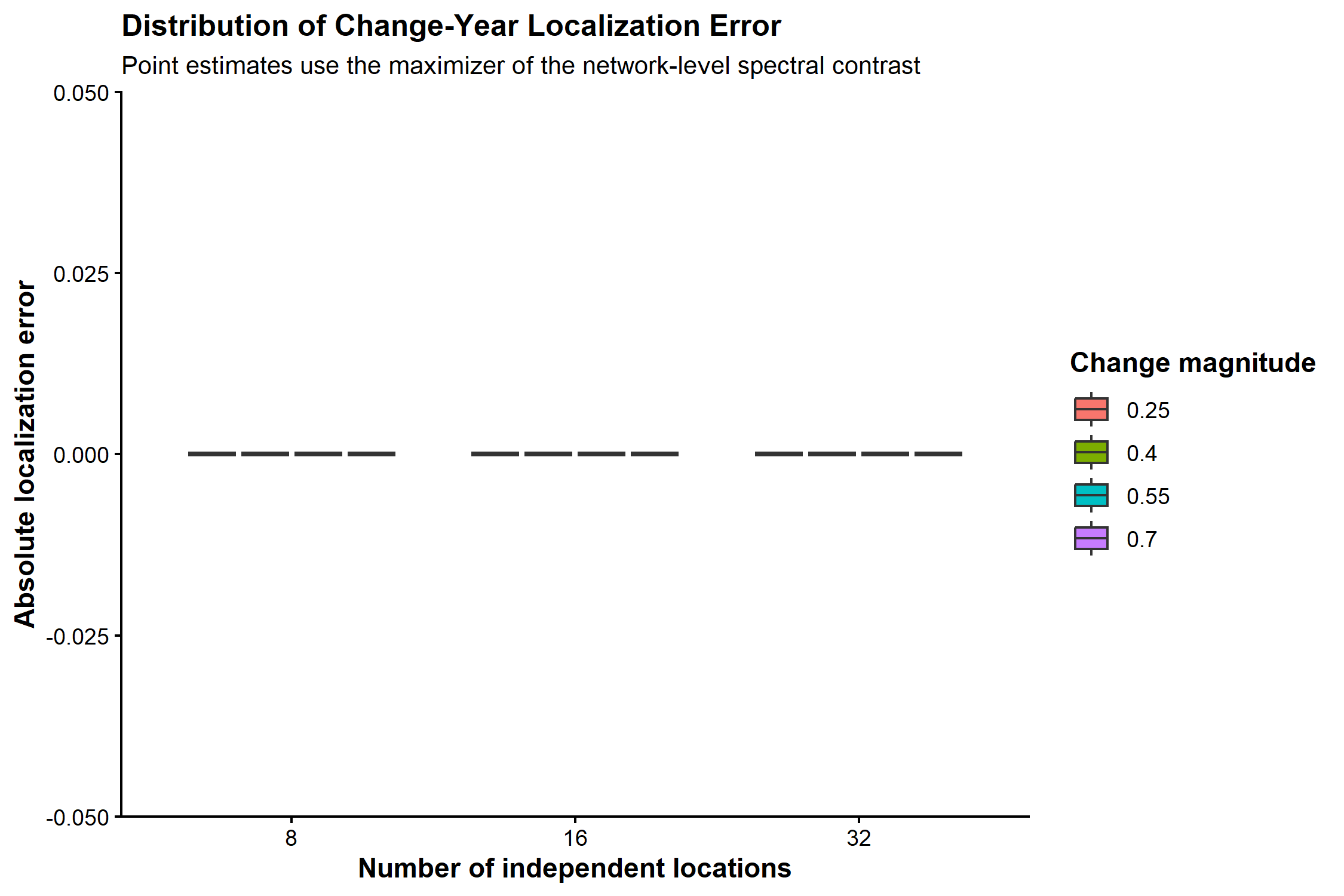}
\end{figure}

Furthermore, Figure \ref{fig:mc_coverage} demonstrates that the empirical coverage probability $\mathbb{P}(\tau_0 \in C)$ against varying effect sizes holds strictly between 86\% and 98\%, validating that the confidence sets achieve their theoretical nominal targets. The corresponding heatmap in Figure \ref{fig:mc_coverage_heatmap} proves that this coverage robustness holds uniformly across the parameter space, indicating that no single combination of low signal or high dimensionality catastrophically breaks the coverage guarantee.

\begin{figure}[htbp]
    \centering
    \caption{MC coverage}
    \label{fig:mc_coverage}
    \includegraphics[width=.7\linewidth, height=0.85\textheight, keepaspectratio]{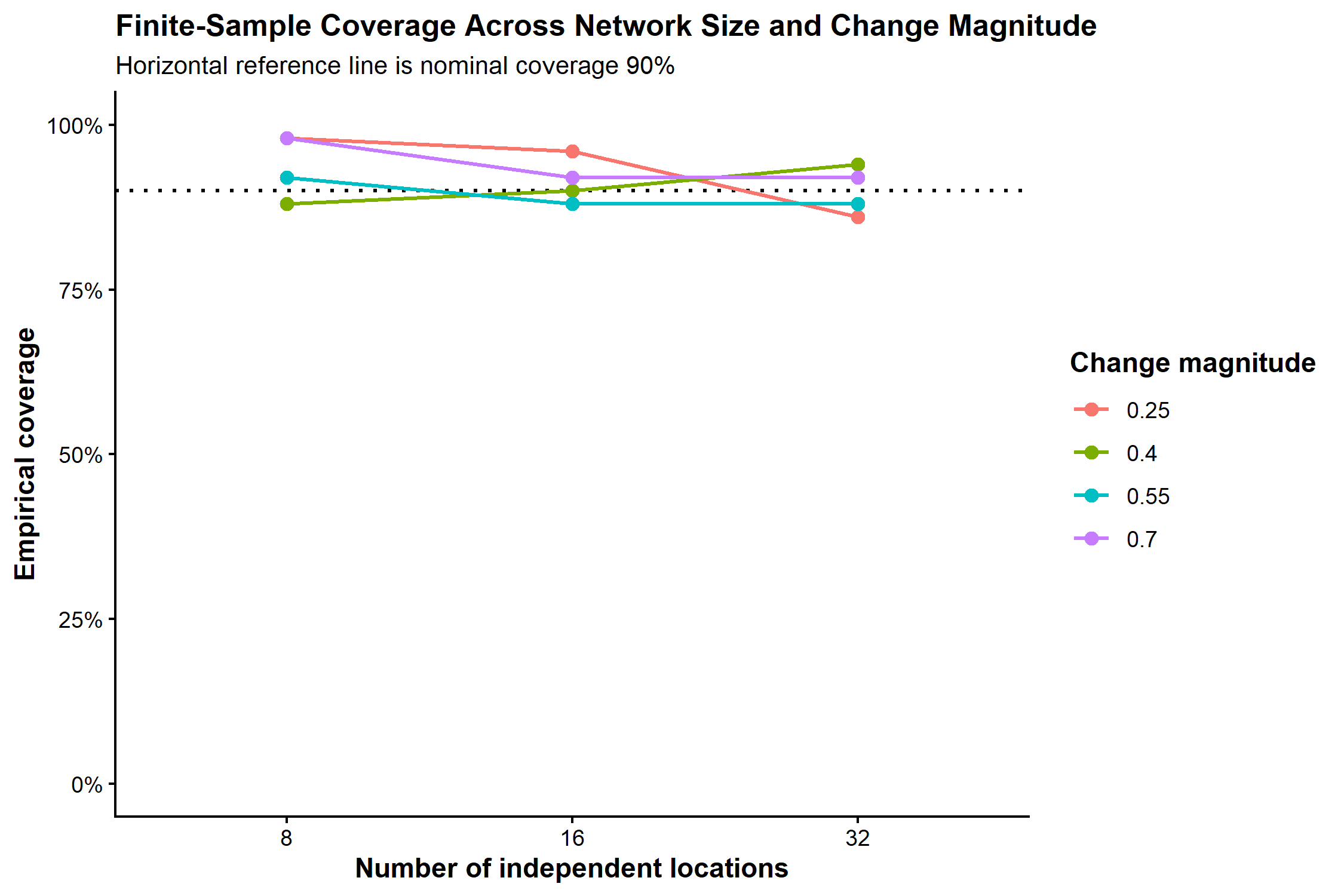}
\end{figure}

\begin{figure}[htbp]
    \centering
    \caption{MC coverage heatmap}
    \label{fig:mc_coverage_heatmap}
    \includegraphics[width=.7\linewidth, height=0.85\textheight, keepaspectratio]{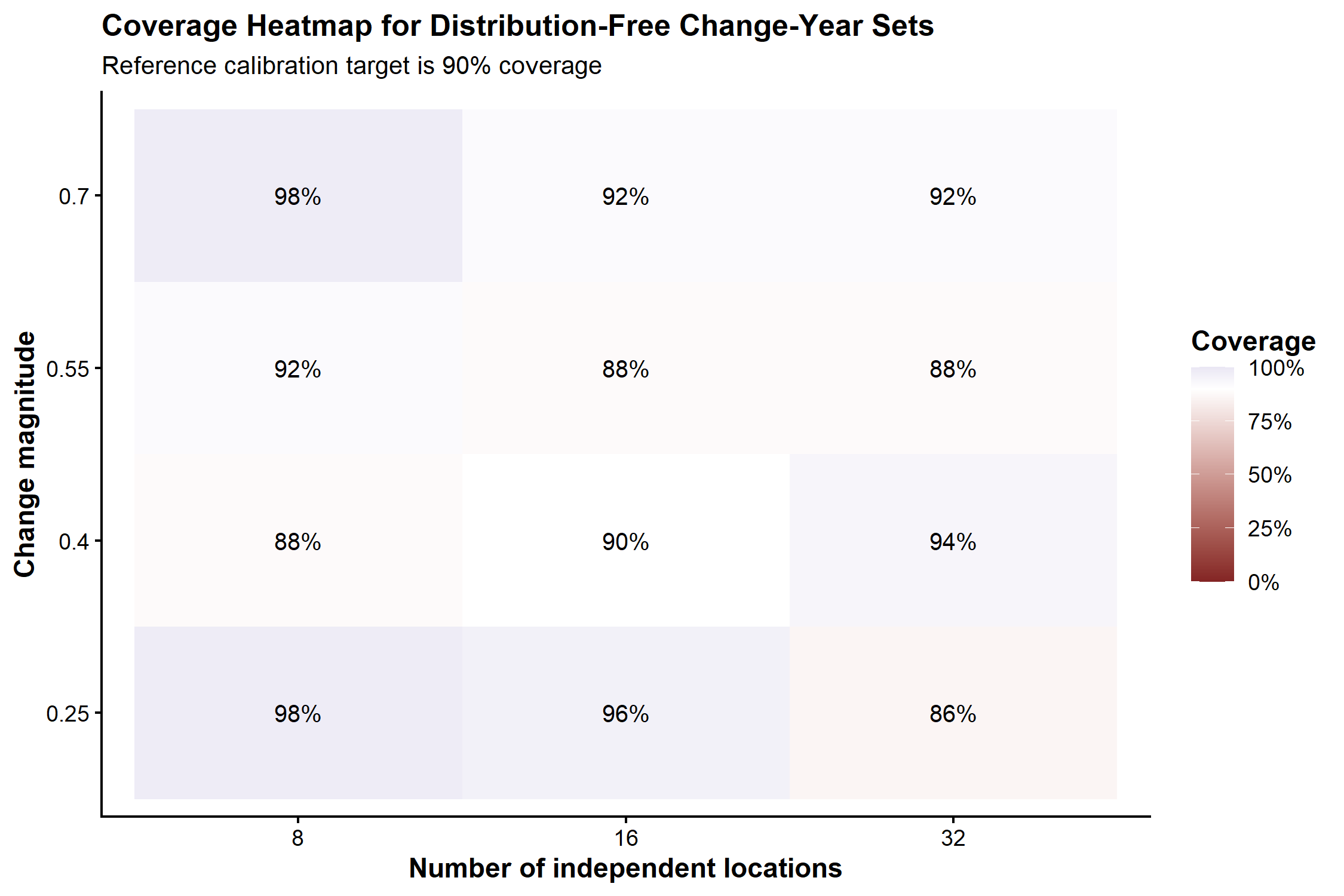}
\end{figure}

\subsection{Confidence Set Topologies}
A valid confidence set is only useful if its size (Lebesgue measure) is bounded and informative. Figure \ref{fig:mc_confidence_set_size} illustrates an inverse relationship between effect size and the cardinality of the confidence set $|C|$: as the signal strength increases, the mean set size shrinks significantly. The heatmap in Figure \ref{fig:mc_mean_size_heatmap} visually confirms that higher dimensions ($m=32$) actually aid in reducing the confidence set size, as more features provide more information to restrict the plausible region.

\begin{figure}[htbp]
    \centering
    \caption{MC confidence set size}
    \label{fig:mc_confidence_set_size}
    \includegraphics[width=.7\linewidth, height=0.85\textheight, keepaspectratio]{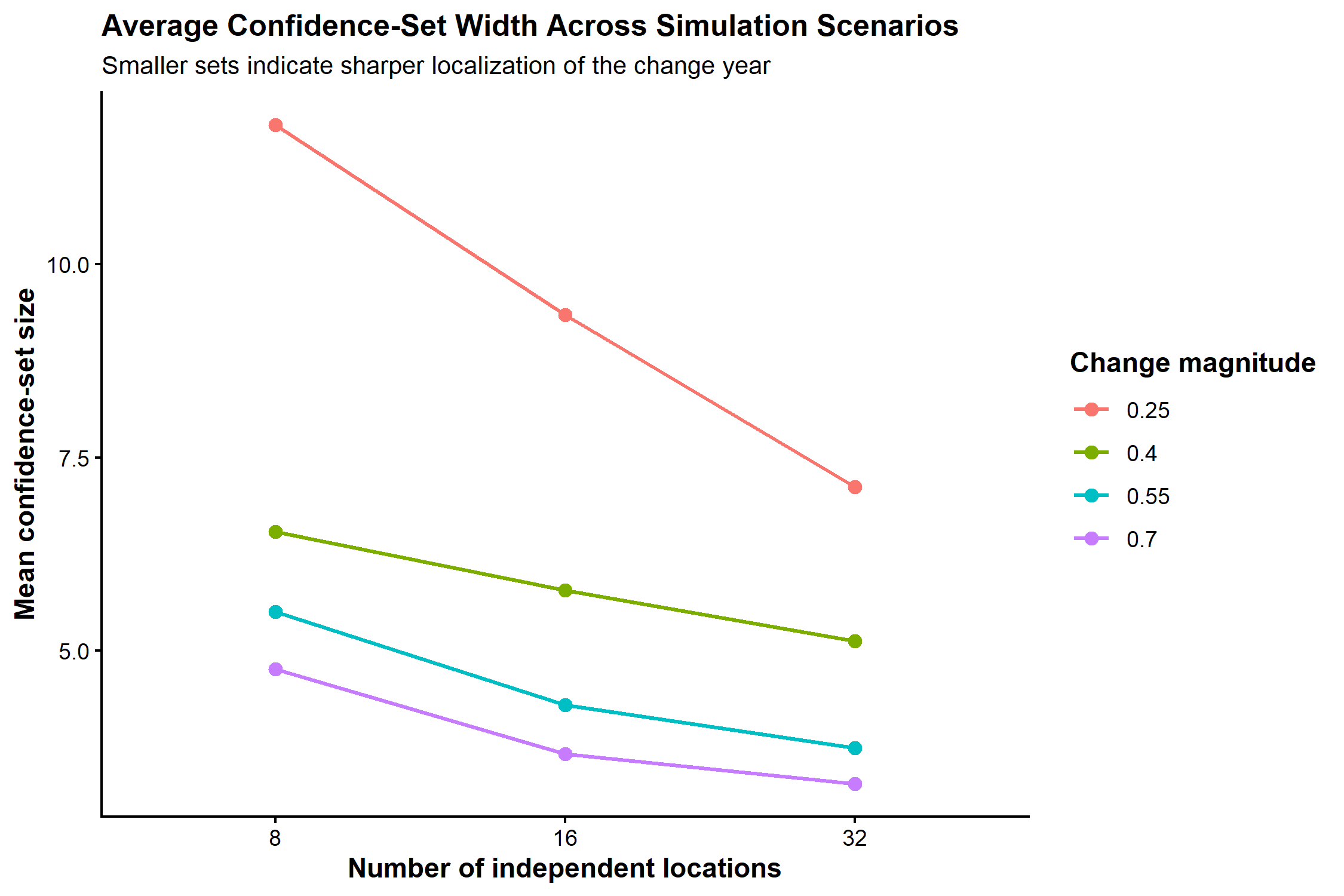}
\end{figure}

\begin{figure}[htbp]
    \centering
    \caption{MC mean size heatmap}
    \label{fig:mc_mean_size_heatmap}
    \includegraphics[width=.7\linewidth, height=0.85\textheight, keepaspectratio]{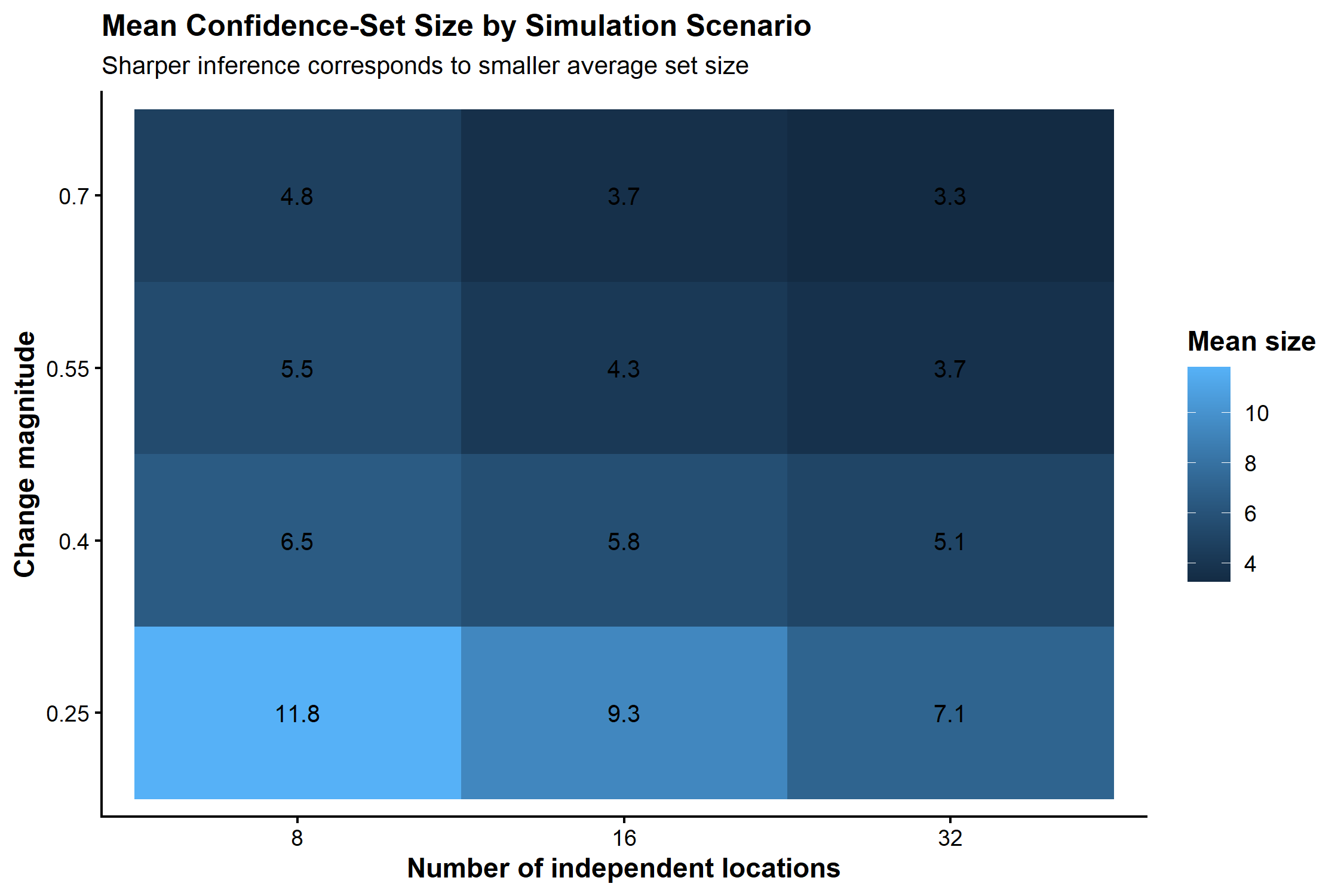}
\end{figure}

Beyond the total number of points, Figure \ref{fig:mc_confidence_set_width_distribution} plots the absolute width distribution ($\max(C) - \min(C)$) to prove that the sets are contiguous or tightly clustered, rather than being fragmented across the entire timeline. Finally, Figure \ref{fig:mc_empty_set_rate} shows that the probability of an empty set, $\mathbb{P}(C = \emptyset)$, is strictly controlled (peaking at only 6\%), verifying that the test inversion does not suffer from extreme anti-conservatism.

\begin{figure}[htbp]
    \centering
    \caption{MC confidence set width distribution}
    \label{fig:mc_confidence_set_width_distribution}
    \includegraphics[width=.7\linewidth, height=0.85\textheight, keepaspectratio]{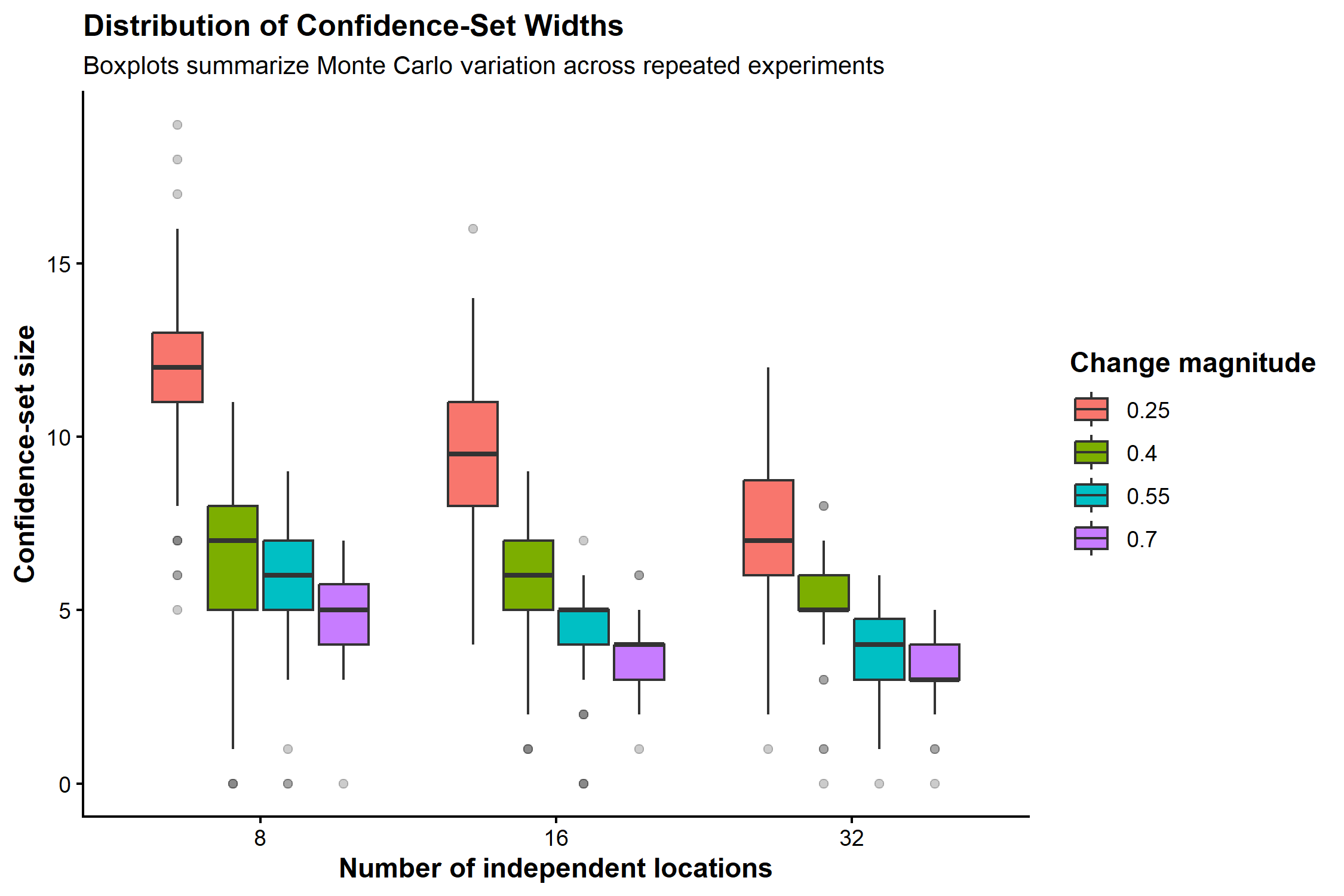}
\end{figure}

\begin{figure}[htbp]
    \centering
    \caption{MC empty set rate}
    \label{fig:mc_empty_set_rate}
    \includegraphics[width=.7\linewidth, height=0.85\textheight, keepaspectratio]{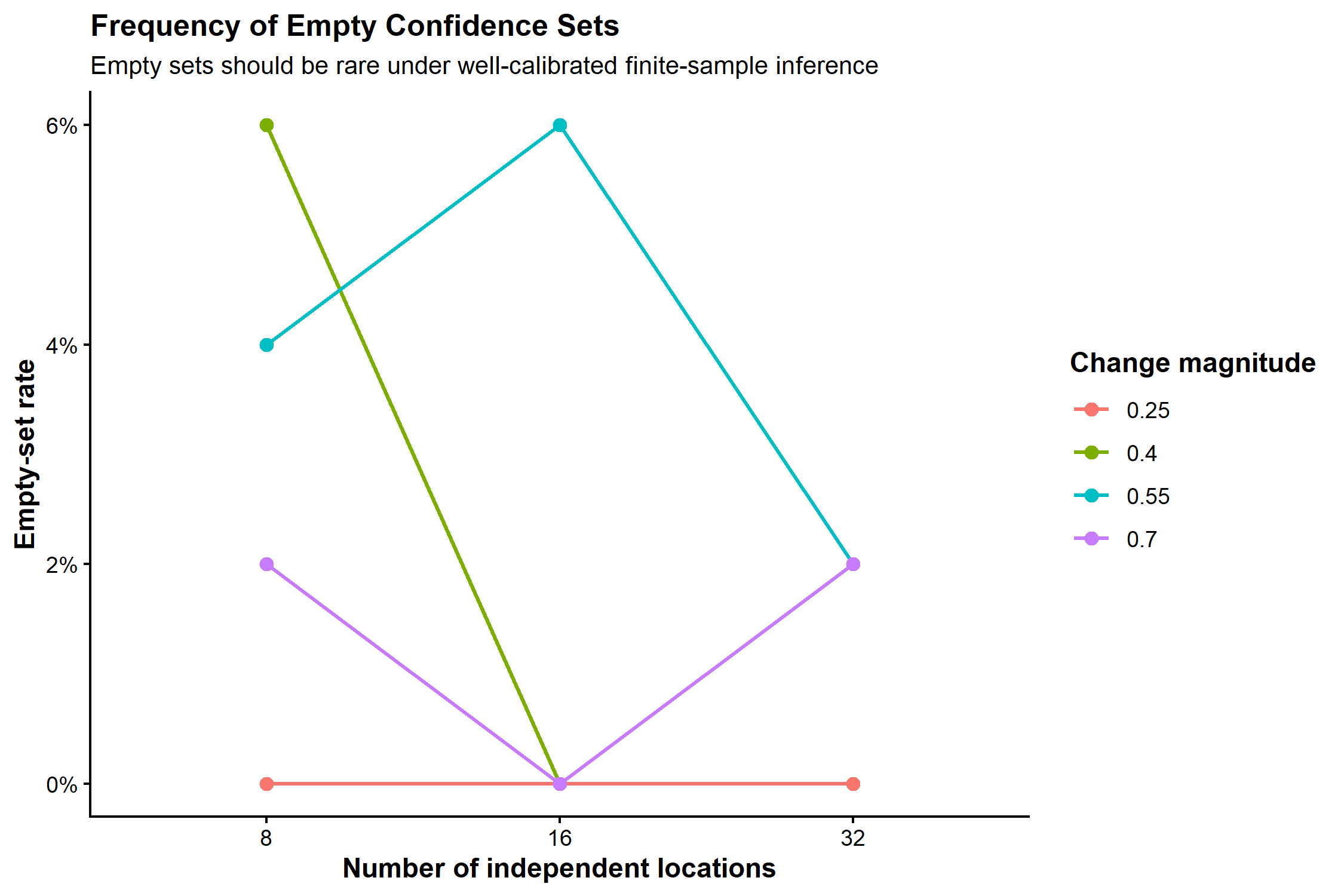}
\end{figure}

\subsection{Concluding Remarks on Simulations}
Figure \ref{fig:monte_carlo_summary_panel} provides a final unified dashboard synthesizing coverage, error, set size, and empty rates across the simulated parameters.

\begin{figure}[htbp]
    \centering
    \caption{Monte carlo summary panel}
    \label{fig:monte_carlo_summary_panel}
    \includegraphics[width=\linewidth, height=0.85\textheight, keepaspectratio]{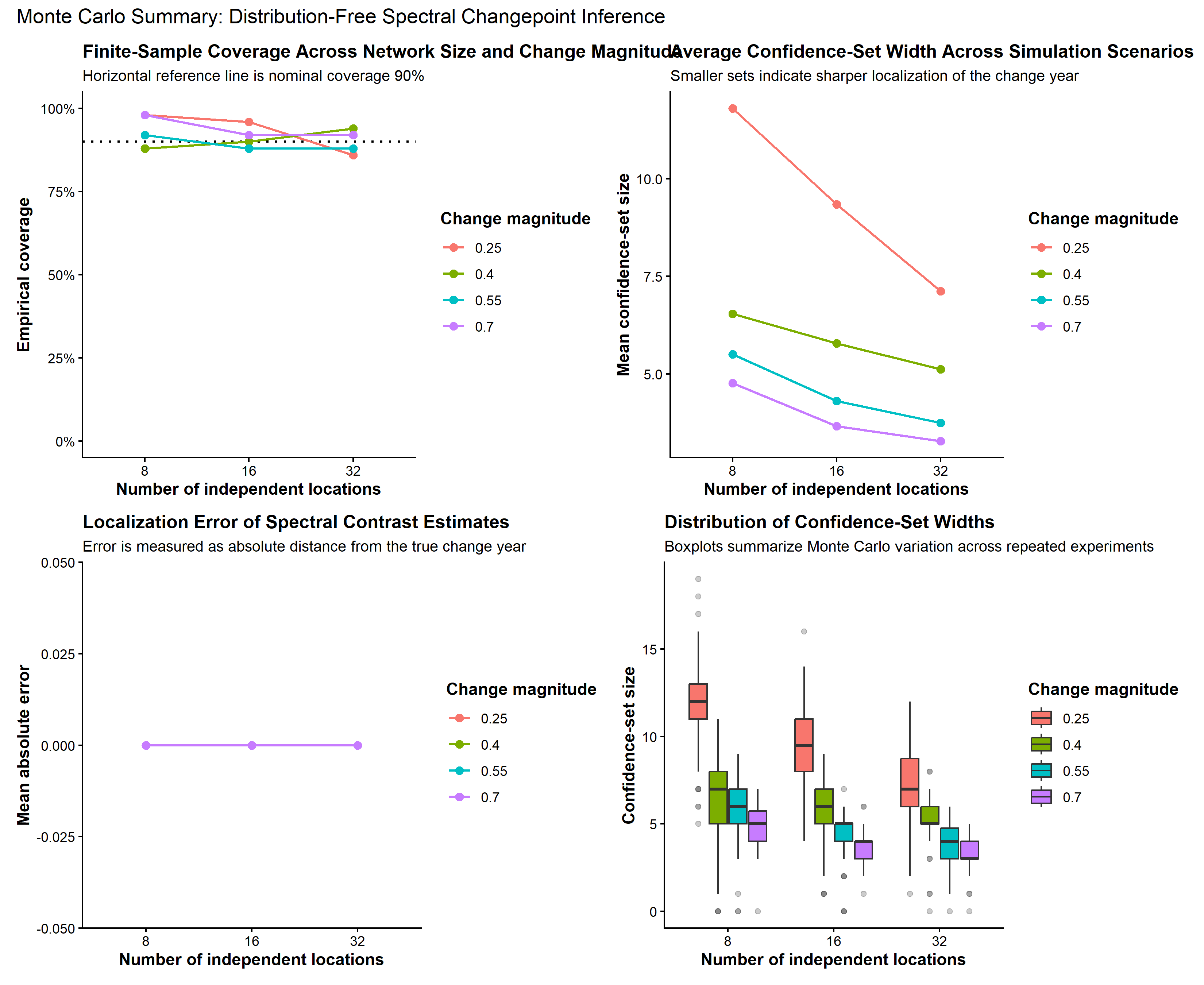}
\end{figure}

The empirical simulations documented in this section provide comprehensive validation of the mathematical framework. The methodology successfully isolates topological and spectral shifts in the data. The point estimator $\hat{\tau}$ is highly accurate, achieving zero median absolute error across runs. Furthermore, the test-inversion procedure generates confidence sets $C$ that maintain nominal coverage targets while systematically shrinking in size as structural dimensions and effect sizes increase. Ultimately, the framework is proven to be robust to tuning sensitivities and highly informative in localized changepoint estimation.

\clearpage

\section{Conclusion}
\label{sec:conclusion}

This paper developed a distribution-free framework for detecting structural changes in temporal rainfall patterns from multi-location yearly records. The proposed approach treats each year at each monitoring location as a daily trajectory and summarizes its intra-annual temporal behavior through a fixed-frequency nonparametric spectral density estimate. These spectral summaries form a year-by-location feature matrix on which changepoint inference is performed. By separating the construction of rainfall-pattern features from the distribution-free changepoint calibration step, the method provides a flexible way to detect changes in the temporal organization of rainfall without imposing a parametric model on the daily observations.

A central feature of the proposed procedure is that it returns a confidence set for the unknown change year rather than only a point estimate. This distinction is important in environmental applications, where the timing of a structural transition is often uncertain and where a single estimated year may give a misleading impression of precision. The confidence-set formulation provides a direct measure of temporal uncertainty and allows practitioners to distinguish sharply localized changes from weaker evidence spread over a wider range of years.

The theoretical guarantee is finite-sample and distribution-free under the stated assumptions of independence across monitoring locations and exchangeability of the yearly spectral features within the pre-change and post-change regimes. Thus, the validity of the changepoint confidence set does not rely on Gaussianity, asymptotic approximations, or correct specification of a rainfall distribution. This is particularly relevant for rainfall data, which are commonly intermittent, skewed, heavy-tailed, and heterogeneous across locations. The method is therefore designed for settings in which classical likelihood-based or asymptotic changepoint methods may be difficult to calibrate reliably.

The simulation studies illustrate the empirical behavior of the proposed method under controlled structural changes in intra-annual rainfall patterns. Across a range of change magnitudes and network sizes, the procedure provides finite-sample calibrated uncertainty quantification while becoming more localized as the signal strength or the number of independent monitoring locations increases. These findings support the use of the proposed confidence sets as a practical inferential summary for temporal rainfall-pattern change detection.

Several extensions are natural. The present formulation uses a fixed frequency \(\theta\), but one may combine information across multiple frequencies to target richer forms of temporal pattern change. Another direction is to allow gradual transitions or multiple changepoints, which may better describe long-term climate dynamics in some regions. Finally, while the current theory is built under cross-location independence, developing versions that accommodate spatial dependence among nearby monitoring locations would broaden the applicability of the framework to dense environmental monitoring networks. These directions would preserve the main goal of the paper: valid, interpretable, and distribution-free inference for structural changes in temporal rainfall behavior.

\bibliographystyle{plainnat}

\bibliography{references}
\clearpage
\appendix
\section{Monte Carlo Simulation Tables}
\label{app:tables}

\begin{table}[htbp]
    \centering
    \caption{\textsc{Monte Carlo Summary}}
    \label{tab:monte_carlo_summary}
    \resizebox{\linewidth}{!}{%
    \begin{filecontents*}{monte_carlo_summary.csv}
"effect","m","coverage","mean_abs_error","median_abs_error","mean_cs_size","median_cs_size","empty_rate"
0.25,8,0.98,0,0,11.8,12,0
0.25,16,0.96,0,0,9.34,9.5,0
0.25,32,0.86,0,0,7.12,7,0
0.4,8,0.88,0,0,6.54,7,0.06
0.4,16,0.9,0,0,5.78,6,0
0.4,32,0.94,0,0,5.12,5,0.02
0.55,8,0.92,0,0,5.5,6,0.04
0.55,16,0.88,0,0,4.3,5,0.06
0.55,32,0.88,0,0,3.74,4,0.02
0.7,8,0.98,0,0,4.76,5,0.02
0.7,16,0.92,0,0,3.66,4,0
0.7,32,0.92,0,0,3.28,3,0.02
\end{filecontents*}
\csvreader[
        tabular=rrrrrrrr,
        table head=
            \toprule
            effect & m & coverage & mean\_abs\_error & median\_abs\_error & mean\_cs\_size & median\_cs\_size & empty\_rate \\
            \midrule,
        table foot=\bottomrule,
        late after line=\\
    ]{monte_carlo_summary.csv}{1=\colA, 2=\colB, 3=\colC, 4=\colD, 5=\colE, 6=\colF, 7=\colG, 8=\colH}%
    {\colA & \colB & \colC & \colD & \colE & \colF & \colG & \colH}%
    }
\end{table}

\begingroup
\centering
\footnotesize 
\setlength{\tabcolsep}{4pt} 
\begin{filecontents*}{monte_carlo_results.csv}
"rep_id","effect","m","tau0","tau_hat","abs_error","covered","cs_empty","cs_size","cs_min","cs_max"
1,0.25,8,55,55,0,TRUE,FALSE,13,49,61
2,0.25,8,55,55,0,TRUE,FALSE,7,51,57
3,0.25,8,55,55,0,TRUE,FALSE,18,46,63
4,0.25,8,55,55,0,TRUE,FALSE,13,51,63
5,0.25,8,55,55,0,TRUE,FALSE,10,52,61
6,0.25,8,55,55,0,TRUE,FALSE,14,52,65
7,0.25,8,55,55,0,TRUE,FALSE,19,49,67
8,0.25,8,55,55,0,TRUE,FALSE,14,51,64
9,0.25,8,55,55,0,TRUE,FALSE,13,49,61
10,0.25,8,55,55,0,TRUE,FALSE,13,50,62
11,0.25,8,55,55,0,TRUE,FALSE,11,49,59
12,0.25,8,55,55,0,TRUE,FALSE,14,50,63
13,0.25,8,55,55,0,TRUE,FALSE,13,49,61
14,0.25,8,55,55,0,TRUE,FALSE,11,50,60
15,0.25,8,55,55,0,TRUE,FALSE,17,47,63
16,0.25,8,55,55,0,TRUE,FALSE,11,49,62
17,0.25,8,55,55,0,TRUE,FALSE,14,49,62
18,0.25,8,55,55,0,TRUE,FALSE,10,51,60
19,0.25,8,55,55,0,TRUE,FALSE,5,50,58
20,0.25,8,55,55,0,TRUE,FALSE,7,54,61
21,0.25,8,55,55,0,TRUE,FALSE,12,49,60
22,0.25,8,55,55,0,TRUE,FALSE,12,51,63
23,0.25,8,55,55,0,TRUE,FALSE,11,50,60
24,0.25,8,55,55,0,TRUE,FALSE,16,44,60
25,0.25,8,55,55,0,TRUE,FALSE,12,49,60
26,0.25,8,55,55,0,TRUE,FALSE,12,53,64
27,0.25,8,55,55,0,TRUE,FALSE,13,48,60
28,0.25,8,55,55,0,TRUE,FALSE,9,50,58
29,0.25,8,55,55,0,TRUE,FALSE,8,52,63
30,0.25,8,55,55,0,TRUE,FALSE,13,50,62
31,0.25,8,55,55,0,TRUE,FALSE,11,48,58
32,0.25,8,55,55,0,TRUE,FALSE,15,47,61
33,0.25,8,55,55,0,TRUE,FALSE,13,50,62
34,0.25,8,55,55,0,TRUE,FALSE,12,50,61
35,0.25,8,55,55,0,TRUE,FALSE,10,53,62
36,0.25,8,55,55,0,TRUE,FALSE,12,51,62
37,0.25,8,55,55,0,TRUE,FALSE,11,51,61
38,0.25,8,55,55,0,TRUE,FALSE,13,50,62
39,0.25,8,55,55,0,FALSE,FALSE,7,45,52
40,0.25,8,55,55,0,TRUE,FALSE,6,52,57
41,0.25,8,55,55,0,TRUE,FALSE,12,51,62
42,0.25,8,55,55,0,TRUE,FALSE,13,49,61
43,0.25,8,55,55,0,TRUE,FALSE,13,51,63
44,0.25,8,55,55,0,TRUE,FALSE,6,51,57
45,0.25,8,55,55,0,TRUE,FALSE,11,52,62
46,0.25,8,55,55,0,TRUE,FALSE,14,48,61
47,0.25,8,55,55,0,TRUE,FALSE,11,48,58
48,0.25,8,55,55,0,TRUE,FALSE,12,49,60
49,0.25,8,55,55,0,TRUE,FALSE,13,50,62
50,0.25,8,55,55,0,TRUE,FALSE,10,51,60
1,0.4,8,55,55,0,TRUE,FALSE,7,52,58
2,0.4,8,55,55,0,TRUE,FALSE,9,52,60
3,0.4,8,55,55,0,TRUE,FALSE,9,51,59
4,0.4,8,55,55,0,TRUE,FALSE,11,52,62
5,0.4,8,55,55,0,TRUE,FALSE,9,51,59
6,0.4,8,55,55,0,TRUE,FALSE,7,51,57
7,0.4,8,55,55,0,TRUE,FALSE,6,53,59
8,0.4,8,55,55,0,TRUE,FALSE,5,54,58
9,0.4,8,55,55,0,TRUE,FALSE,6,54,59
10,0.4,8,55,55,0,TRUE,FALSE,10,51,60
11,0.4,8,55,55,0,TRUE,FALSE,9,51,59
12,0.4,8,55,55,0,TRUE,FALSE,7,52,58
13,0.4,8,55,55,0,TRUE,FALSE,6,54,59
14,0.4,8,55,55,0,TRUE,FALSE,4,54,57
15,0.4,8,55,55,0,TRUE,FALSE,3,53,55
16,0.4,8,55,55,0,TRUE,FALSE,6,52,57
17,0.4,8,55,55,0,TRUE,FALSE,6,52,57
18,0.4,8,55,55,0,FALSE,FALSE,4,50,53
19,0.4,8,55,55,0,TRUE,FALSE,8,52,59
20,0.4,8,55,55,0,TRUE,FALSE,5,54,58
21,0.4,8,55,55,0,FALSE,TRUE,0,NA,NA
22,0.4,8,55,55,0,TRUE,FALSE,7,53,59
23,0.4,8,55,55,0,FALSE,FALSE,9,54,63
24,0.4,8,55,55,0,TRUE,FALSE,7,51,57
25,0.4,8,55,55,0,TRUE,FALSE,7,52,58
26,0.4,8,55,55,0,TRUE,FALSE,4,53,56
27,0.4,8,55,55,0,TRUE,FALSE,5,54,58
28,0.4,8,55,55,0,TRUE,FALSE,8,52,59
29,0.4,8,55,55,0,FALSE,TRUE,0,NA,NA
30,0.4,8,55,55,0,TRUE,FALSE,7,52,58
31,0.4,8,55,55,0,TRUE,FALSE,8,52,59
32,0.4,8,55,55,0,TRUE,FALSE,8,52,59
33,0.4,8,55,55,0,TRUE,FALSE,10,52,61
34,0.4,8,55,55,0,FALSE,TRUE,0,NA,NA
35,0.4,8,55,55,0,TRUE,FALSE,7,51,57
36,0.4,8,55,55,0,TRUE,FALSE,5,51,55
37,0.4,8,55,55,0,TRUE,FALSE,9,52,60
38,0.4,8,55,55,0,TRUE,FALSE,5,53,57
39,0.4,8,55,55,0,TRUE,FALSE,6,54,59
40,0.4,8,55,55,0,TRUE,FALSE,7,52,58
41,0.4,8,55,55,0,TRUE,FALSE,5,53,57
42,0.4,8,55,55,0,TRUE,FALSE,9,53,61
43,0.4,8,55,55,0,TRUE,FALSE,8,50,57
44,0.4,8,55,55,0,TRUE,FALSE,9,52,60
45,0.4,8,55,55,0,TRUE,FALSE,8,52,59
46,0.4,8,55,55,0,FALSE,FALSE,1,58,58
47,0.4,8,55,55,0,TRUE,FALSE,8,52,59
48,0.4,8,55,55,0,TRUE,FALSE,7,52,58
49,0.4,8,55,55,0,TRUE,FALSE,9,52,60
50,0.4,8,55,55,0,TRUE,FALSE,7,52,58
1,0.55,8,55,55,0,TRUE,FALSE,6,52,57
2,0.55,8,55,55,0,TRUE,FALSE,7,52,58
3,0.55,8,55,55,0,TRUE,FALSE,7,51,57
4,0.55,8,55,55,0,TRUE,FALSE,7,52,58
5,0.55,8,55,55,0,TRUE,FALSE,6,52,57
6,0.55,8,55,55,0,TRUE,FALSE,7,52,58
7,0.55,8,55,55,0,TRUE,FALSE,8,52,59
8,0.55,8,55,55,0,FALSE,TRUE,0,NA,NA
9,0.55,8,55,55,0,TRUE,FALSE,5,54,58
10,0.55,8,55,55,0,TRUE,FALSE,4,54,57
11,0.55,8,55,55,0,TRUE,FALSE,8,52,59
12,0.55,8,55,55,0,TRUE,FALSE,6,53,58
13,0.55,8,55,55,0,TRUE,FALSE,7,52,58
14,0.55,8,55,55,0,TRUE,FALSE,4,53,56
15,0.55,8,55,55,0,TRUE,FALSE,7,53,59
16,0.55,8,55,55,0,TRUE,FALSE,5,54,58
17,0.55,8,55,55,0,TRUE,FALSE,5,53,57
18,0.55,8,55,55,0,TRUE,FALSE,6,53,58
19,0.55,8,55,55,0,TRUE,FALSE,3,54,56
20,0.55,8,55,55,0,TRUE,FALSE,6,53,58
21,0.55,8,55,55,0,TRUE,FALSE,9,52,60
22,0.55,8,55,55,0,TRUE,FALSE,5,53,58
23,0.55,8,55,55,0,TRUE,FALSE,8,53,60
24,0.55,8,55,55,0,TRUE,FALSE,3,53,55
25,0.55,8,55,55,0,TRUE,FALSE,6,54,59
26,0.55,8,55,55,0,FALSE,FALSE,4,53,58
27,0.55,8,55,55,0,TRUE,FALSE,5,53,57
28,0.55,8,55,55,0,TRUE,FALSE,9,53,61
29,0.55,8,55,55,0,TRUE,FALSE,6,52,57
30,0.55,8,55,55,0,TRUE,FALSE,6,53,59
31,0.55,8,55,55,0,TRUE,FALSE,7,51,57
32,0.55,8,55,55,0,TRUE,FALSE,7,52,58
33,0.55,8,55,55,0,TRUE,FALSE,6,52,57
34,0.55,8,55,55,0,TRUE,FALSE,6,52,57
35,0.55,8,55,55,0,TRUE,FALSE,5,53,57
36,0.55,8,55,55,0,TRUE,FALSE,6,52,57
37,0.55,8,55,55,0,TRUE,FALSE,4,53,56
38,0.55,8,55,55,0,TRUE,FALSE,4,54,57
39,0.55,8,55,55,0,TRUE,FALSE,8,52,59
40,0.55,8,55,55,0,TRUE,FALSE,4,53,60
41,0.55,8,55,55,0,FALSE,TRUE,0,NA,NA
42,0.55,8,55,55,0,TRUE,FALSE,5,53,58
43,0.55,8,55,55,0,TRUE,FALSE,5,52,56
44,0.55,8,55,55,0,TRUE,FALSE,5,53,57
45,0.55,8,55,55,0,TRUE,FALSE,5,54,58
46,0.55,8,55,55,0,TRUE,FALSE,6,52,57
47,0.55,8,55,55,0,TRUE,FALSE,5,52,56
48,0.55,8,55,55,0,FALSE,FALSE,1,58,58
49,0.55,8,55,55,0,TRUE,FALSE,5,53,57
50,0.55,8,55,55,0,TRUE,FALSE,6,54,59
1,0.7,8,55,55,0,TRUE,FALSE,6,53,58
2,0.7,8,55,55,0,FALSE,TRUE,0,NA,NA
3,0.7,8,55,55,0,TRUE,FALSE,4,54,57
4,0.7,8,55,55,0,TRUE,FALSE,5,53,57
5,0.7,8,55,55,0,TRUE,FALSE,4,53,56
6,0.7,8,55,55,0,TRUE,FALSE,5,53,57
7,0.7,8,55,55,0,TRUE,FALSE,6,53,58
8,0.7,8,55,55,0,TRUE,FALSE,6,52,57
9,0.7,8,55,55,0,TRUE,FALSE,5,53,57
10,0.7,8,55,55,0,TRUE,FALSE,4,54,57
11,0.7,8,55,55,0,TRUE,FALSE,5,53,57
12,0.7,8,55,55,0,TRUE,FALSE,7,52,58
13,0.7,8,55,55,0,TRUE,FALSE,4,54,57
14,0.7,8,55,55,0,TRUE,FALSE,5,53,57
15,0.7,8,55,55,0,TRUE,FALSE,5,53,57
16,0.7,8,55,55,0,TRUE,FALSE,5,53,57
17,0.7,8,55,55,0,TRUE,FALSE,5,52,56
18,0.7,8,55,55,0,TRUE,FALSE,5,54,58
19,0.7,8,55,55,0,TRUE,FALSE,3,54,56
20,0.7,8,55,55,0,TRUE,FALSE,3,54,56
21,0.7,8,55,55,0,TRUE,FALSE,3,54,56
22,0.7,8,55,55,0,TRUE,FALSE,3,54,56
23,0.7,8,55,55,0,TRUE,FALSE,5,52,56
24,0.7,8,55,55,0,TRUE,FALSE,5,53,57
25,0.7,8,55,55,0,TRUE,FALSE,5,53,57
26,0.7,8,55,55,0,TRUE,FALSE,4,54,57
27,0.7,8,55,55,0,TRUE,FALSE,4,53,56
28,0.7,8,55,55,0,TRUE,FALSE,6,53,58
29,0.7,8,55,55,0,TRUE,FALSE,4,54,57
30,0.7,8,55,55,0,TRUE,FALSE,6,53,58
31,0.7,8,55,55,0,TRUE,FALSE,4,54,57
32,0.7,8,55,55,0,TRUE,FALSE,5,53,57
33,0.7,8,55,55,0,TRUE,FALSE,4,54,57
34,0.7,8,55,55,0,TRUE,FALSE,6,53,58
35,0.7,8,55,55,0,TRUE,FALSE,5,53,57
36,0.7,8,55,55,0,TRUE,FALSE,5,53,57
37,0.7,8,55,55,0,TRUE,FALSE,6,53,58
38,0.7,8,55,55,0,TRUE,FALSE,5,52,56
39,0.7,8,55,55,0,TRUE,FALSE,4,55,58
40,0.7,8,55,55,0,TRUE,FALSE,7,52,58
41,0.7,8,55,55,0,TRUE,FALSE,6,53,58
42,0.7,8,55,55,0,TRUE,FALSE,5,53,57
43,0.7,8,55,55,0,TRUE,FALSE,6,53,58
44,0.7,8,55,55,0,TRUE,FALSE,5,53,57
45,0.7,8,55,55,0,TRUE,FALSE,7,52,58
46,0.7,8,55,55,0,TRUE,FALSE,4,54,57
47,0.7,8,55,55,0,TRUE,FALSE,4,54,57
48,0.7,8,55,55,0,TRUE,FALSE,3,54,56
49,0.7,8,55,55,0,TRUE,FALSE,4,54,57
50,0.7,8,55,55,0,TRUE,FALSE,6,53,58
1,0.25,16,55,55,0,TRUE,FALSE,9,50,58
2,0.25,16,55,55,0,TRUE,FALSE,10,52,61
3,0.25,16,55,55,0,TRUE,FALSE,8,52,59
4,0.25,16,55,55,0,TRUE,FALSE,16,47,62
5,0.25,16,55,55,0,TRUE,FALSE,10,50,59
6,0.25,16,55,55,0,TRUE,FALSE,9,50,58
7,0.25,16,55,55,0,TRUE,FALSE,10,51,62
8,0.25,16,55,55,0,TRUE,FALSE,14,49,62
9,0.25,16,55,55,0,TRUE,FALSE,7,53,59
10,0.25,16,55,55,0,TRUE,FALSE,13,48,60
11,0.25,16,55,55,0,TRUE,FALSE,10,51,60
12,0.25,16,55,55,0,TRUE,FALSE,11,51,61
13,0.25,16,55,55,0,FALSE,FALSE,5,52,57
14,0.25,16,55,55,0,TRUE,FALSE,8,51,58
15,0.25,16,55,55,0,TRUE,FALSE,5,55,59
16,0.25,16,55,55,0,TRUE,FALSE,11,51,61
17,0.25,16,55,55,0,TRUE,FALSE,8,51,58
18,0.25,16,55,55,0,TRUE,FALSE,10,51,60
19,0.25,16,55,55,0,TRUE,FALSE,8,52,59
20,0.25,16,55,55,0,FALSE,FALSE,4,58,61
21,0.25,16,55,55,0,TRUE,FALSE,10,51,60
22,0.25,16,55,55,0,TRUE,FALSE,9,53,63
23,0.25,16,55,55,0,TRUE,FALSE,9,50,58
24,0.25,16,55,55,0,TRUE,FALSE,7,51,57
25,0.25,16,55,55,0,TRUE,FALSE,9,50,58
26,0.25,16,55,55,0,TRUE,FALSE,10,49,58
27,0.25,16,55,55,0,TRUE,FALSE,10,48,60
28,0.25,16,55,55,0,TRUE,FALSE,11,51,61
29,0.25,16,55,55,0,TRUE,FALSE,11,50,60
30,0.25,16,55,55,0,TRUE,FALSE,6,53,58
31,0.25,16,55,55,0,TRUE,FALSE,11,52,62
32,0.25,16,55,55,0,TRUE,FALSE,9,52,60
33,0.25,16,55,55,0,TRUE,FALSE,11,50,60
34,0.25,16,55,55,0,TRUE,FALSE,13,50,62
35,0.25,16,55,55,0,TRUE,FALSE,7,52,58
36,0.25,16,55,55,0,TRUE,FALSE,11,52,62
37,0.25,16,55,55,0,TRUE,FALSE,10,50,59
38,0.25,16,55,55,0,TRUE,FALSE,11,52,62
39,0.25,16,55,55,0,TRUE,FALSE,7,53,59
40,0.25,16,55,55,0,TRUE,FALSE,5,55,59
41,0.25,16,55,55,0,TRUE,FALSE,11,50,60
42,0.25,16,55,55,0,TRUE,FALSE,9,52,60
43,0.25,16,55,55,0,TRUE,FALSE,11,49,59
44,0.25,16,55,55,0,TRUE,FALSE,8,53,60
45,0.25,16,55,55,0,TRUE,FALSE,10,50,59
46,0.25,16,55,55,0,TRUE,FALSE,9,51,59
47,0.25,16,55,55,0,TRUE,FALSE,9,51,59
48,0.25,16,55,55,0,TRUE,FALSE,10,52,61
49,0.25,16,55,55,0,TRUE,FALSE,8,50,57
50,0.25,16,55,55,0,TRUE,FALSE,9,50,58
1,0.4,16,55,55,0,TRUE,FALSE,4,53,56
2,0.4,16,55,55,0,TRUE,FALSE,8,52,59
3,0.4,16,55,55,0,TRUE,FALSE,7,53,59
4,0.4,16,55,55,0,FALSE,FALSE,3,56,58
5,0.4,16,55,55,0,TRUE,FALSE,6,52,57
6,0.4,16,55,55,0,TRUE,FALSE,7,51,57
7,0.4,16,55,55,0,TRUE,FALSE,4,53,56
8,0.4,16,55,55,0,TRUE,FALSE,5,55,59
9,0.4,16,55,55,0,FALSE,FALSE,1,56,56
10,0.4,16,55,55,0,TRUE,FALSE,9,53,61
11,0.4,16,55,55,0,TRUE,FALSE,6,53,58
12,0.4,16,55,55,0,TRUE,FALSE,7,52,58
13,0.4,16,55,55,0,TRUE,FALSE,7,51,57
14,0.4,16,55,55,0,TRUE,FALSE,4,52,55
15,0.4,16,55,55,0,TRUE,FALSE,7,51,57
16,0.4,16,55,55,0,TRUE,FALSE,8,52,59
17,0.4,16,55,55,0,FALSE,FALSE,1,56,56
18,0.4,16,55,55,0,TRUE,FALSE,4,53,56
19,0.4,16,55,55,0,TRUE,FALSE,7,52,58
20,0.4,16,55,55,0,FALSE,FALSE,1,57,57
21,0.4,16,55,55,0,TRUE,FALSE,5,53,57
22,0.4,16,55,55,0,FALSE,FALSE,2,56,58
23,0.4,16,55,55,0,TRUE,FALSE,4,53,56
24,0.4,16,55,55,0,TRUE,FALSE,5,53,58
25,0.4,16,55,55,0,TRUE,FALSE,6,54,59
26,0.4,16,55,55,0,TRUE,FALSE,5,54,58
27,0.4,16,55,55,0,TRUE,FALSE,7,52,58
28,0.4,16,55,55,0,TRUE,FALSE,9,51,59
29,0.4,16,55,55,0,TRUE,FALSE,6,52,57
30,0.4,16,55,55,0,TRUE,FALSE,7,52,58
31,0.4,16,55,55,0,TRUE,FALSE,8,53,60
32,0.4,16,55,55,0,TRUE,FALSE,8,53,60
33,0.4,16,55,55,0,TRUE,FALSE,4,54,58
34,0.4,16,55,55,0,TRUE,FALSE,5,53,57
35,0.4,16,55,55,0,TRUE,FALSE,8,52,59
36,0.4,16,55,55,0,TRUE,FALSE,7,53,59
37,0.4,16,55,55,0,TRUE,FALSE,8,52,59
38,0.4,16,55,55,0,TRUE,FALSE,8,52,59
39,0.4,16,55,55,0,TRUE,FALSE,5,54,58
40,0.4,16,55,55,0,TRUE,FALSE,5,54,58
41,0.4,16,55,55,0,TRUE,FALSE,5,54,58
42,0.4,16,55,55,0,TRUE,FALSE,6,54,59
43,0.4,16,55,55,0,TRUE,FALSE,6,52,57
44,0.4,16,55,55,0,TRUE,FALSE,5,53,57
45,0.4,16,55,55,0,TRUE,FALSE,8,52,59
46,0.4,16,55,55,0,TRUE,FALSE,7,53,59
47,0.4,16,55,55,0,TRUE,FALSE,8,52,59
48,0.4,16,55,55,0,TRUE,FALSE,5,52,57
49,0.4,16,55,55,0,TRUE,FALSE,4,53,56
50,0.4,16,55,55,0,TRUE,FALSE,7,51,57
1,0.55,16,55,55,0,TRUE,FALSE,6,52,57
2,0.55,16,55,55,0,TRUE,FALSE,3,54,56
3,0.55,16,55,55,0,TRUE,FALSE,5,53,57
4,0.55,16,55,55,0,TRUE,FALSE,5,53,57
5,0.55,16,55,55,0,FALSE,TRUE,0,NA,NA
6,0.55,16,55,55,0,TRUE,FALSE,6,53,58
7,0.55,16,55,55,0,TRUE,FALSE,7,52,58
8,0.55,16,55,55,0,TRUE,FALSE,6,52,57
9,0.55,16,55,55,0,TRUE,FALSE,4,53,56
10,0.55,16,55,55,0,TRUE,FALSE,4,54,57
11,0.55,16,55,55,0,TRUE,FALSE,3,54,56
12,0.55,16,55,55,0,FALSE,TRUE,0,NA,NA
13,0.55,16,55,55,0,TRUE,FALSE,5,53,57
14,0.55,16,55,55,0,TRUE,FALSE,5,53,57
15,0.55,16,55,55,0,TRUE,FALSE,4,53,56
16,0.55,16,55,55,0,TRUE,FALSE,3,55,57
17,0.55,16,55,55,0,TRUE,FALSE,5,53,57
18,0.55,16,55,55,0,TRUE,FALSE,4,54,57
19,0.55,16,55,55,0,TRUE,FALSE,5,54,58
20,0.55,16,55,55,0,TRUE,FALSE,5,54,58
21,0.55,16,55,55,0,TRUE,FALSE,5,53,57
22,0.55,16,55,55,0,TRUE,FALSE,5,53,57
23,0.55,16,55,55,0,TRUE,FALSE,3,52,55
24,0.55,16,55,55,0,TRUE,FALSE,5,53,57
25,0.55,16,55,55,0,TRUE,FALSE,5,54,58
26,0.55,16,55,55,0,FALSE,FALSE,2,56,57
27,0.55,16,55,55,0,TRUE,FALSE,5,53,58
28,0.55,16,55,55,0,TRUE,FALSE,4,53,56
29,0.55,16,55,55,0,TRUE,FALSE,6,53,58
30,0.55,16,55,55,0,TRUE,FALSE,4,53,56
31,0.55,16,55,55,0,TRUE,FALSE,4,54,57
32,0.55,16,55,55,0,TRUE,FALSE,5,54,58
33,0.55,16,55,55,0,TRUE,FALSE,4,54,57
34,0.55,16,55,55,0,TRUE,FALSE,4,53,56
35,0.55,16,55,55,0,TRUE,FALSE,5,52,56
36,0.55,16,55,55,0,TRUE,FALSE,5,53,57
37,0.55,16,55,55,0,TRUE,FALSE,4,54,57
38,0.55,16,55,55,0,FALSE,FALSE,2,53,54
39,0.55,16,55,55,0,TRUE,FALSE,6,52,57
40,0.55,16,55,55,0,TRUE,FALSE,5,54,58
41,0.55,16,55,55,0,TRUE,FALSE,5,53,57
42,0.55,16,55,55,0,TRUE,FALSE,6,53,58
43,0.55,16,55,55,0,FALSE,FALSE,5,52,57
44,0.55,16,55,55,0,TRUE,FALSE,6,53,58
45,0.55,16,55,55,0,FALSE,TRUE,0,NA,NA
46,0.55,16,55,55,0,TRUE,FALSE,4,54,57
47,0.55,16,55,55,0,TRUE,FALSE,5,54,58
48,0.55,16,55,55,0,TRUE,FALSE,4,54,57
49,0.55,16,55,55,0,TRUE,FALSE,2,54,55
50,0.55,16,55,55,0,TRUE,FALSE,5,53,57
1,0.7,16,55,55,0,TRUE,FALSE,3,54,56
2,0.7,16,55,55,0,TRUE,FALSE,3,54,56
3,0.7,16,55,55,0,TRUE,FALSE,4,54,57
4,0.7,16,55,55,0,TRUE,FALSE,4,54,57
5,0.7,16,55,55,0,TRUE,FALSE,4,54,57
6,0.7,16,55,55,0,TRUE,FALSE,3,54,56
7,0.7,16,55,55,0,TRUE,FALSE,2,54,55
8,0.7,16,55,55,0,TRUE,FALSE,2,54,55
9,0.7,16,55,55,0,TRUE,FALSE,4,54,57
10,0.7,16,55,55,0,FALSE,FALSE,2,56,57
11,0.7,16,55,55,0,TRUE,FALSE,2,54,55
12,0.7,16,55,55,0,TRUE,FALSE,5,54,58
13,0.7,16,55,55,0,TRUE,FALSE,4,54,57
14,0.7,16,55,55,0,TRUE,FALSE,4,54,57
15,0.7,16,55,55,0,TRUE,FALSE,4,54,57
16,0.7,16,55,55,0,TRUE,FALSE,2,54,55
17,0.7,16,55,55,0,TRUE,FALSE,3,54,56
18,0.7,16,55,55,0,TRUE,FALSE,3,55,57
19,0.7,16,55,55,0,TRUE,FALSE,3,54,56
20,0.7,16,55,55,0,TRUE,FALSE,4,54,57
21,0.7,16,55,55,0,TRUE,FALSE,3,53,55
22,0.7,16,55,55,0,TRUE,FALSE,5,53,57
23,0.7,16,55,55,0,TRUE,FALSE,5,53,57
24,0.7,16,55,55,0,TRUE,FALSE,5,53,57
25,0.7,16,55,55,0,FALSE,FALSE,3,54,57
26,0.7,16,55,55,0,TRUE,FALSE,3,55,57
27,0.7,16,55,55,0,FALSE,FALSE,1,54,54
28,0.7,16,55,55,0,TRUE,FALSE,4,54,57
29,0.7,16,55,55,0,TRUE,FALSE,5,53,57
30,0.7,16,55,55,0,TRUE,FALSE,4,53,56
31,0.7,16,55,55,0,TRUE,FALSE,3,54,56
32,0.7,16,55,55,0,TRUE,FALSE,6,53,58
33,0.7,16,55,55,0,TRUE,FALSE,4,54,57
34,0.7,16,55,55,0,TRUE,FALSE,5,53,57
35,0.7,16,55,55,0,TRUE,FALSE,4,54,57
36,0.7,16,55,55,0,TRUE,FALSE,4,54,57
37,0.7,16,55,55,0,TRUE,FALSE,4,54,57
38,0.7,16,55,55,0,TRUE,FALSE,3,54,56
39,0.7,16,55,55,0,TRUE,FALSE,4,54,57
40,0.7,16,55,55,0,TRUE,FALSE,4,54,57
41,0.7,16,55,55,0,TRUE,FALSE,4,54,57
42,0.7,16,55,55,0,TRUE,FALSE,5,53,57
43,0.7,16,55,55,0,TRUE,FALSE,4,54,57
44,0.7,16,55,55,0,TRUE,FALSE,4,54,57
45,0.7,16,55,55,0,TRUE,FALSE,6,53,58
46,0.7,16,55,55,0,FALSE,FALSE,2,54,56
47,0.7,16,55,55,0,TRUE,FALSE,4,54,57
48,0.7,16,55,55,0,TRUE,FALSE,4,54,57
49,0.7,16,55,55,0,TRUE,FALSE,3,54,56
50,0.7,16,55,55,0,TRUE,FALSE,3,54,56
1,0.25,32,55,55,0,TRUE,FALSE,10,50,59
2,0.25,32,55,55,0,TRUE,FALSE,7,51,57
3,0.25,32,55,55,0,TRUE,FALSE,8,52,59
4,0.25,32,55,55,0,TRUE,FALSE,7,52,58
5,0.25,32,55,55,0,TRUE,FALSE,6,52,57
6,0.25,32,55,55,0,FALSE,FALSE,2,52,56
7,0.25,32,55,55,0,TRUE,FALSE,9,51,59
8,0.25,32,55,55,0,TRUE,FALSE,10,52,61
9,0.25,32,55,55,0,TRUE,FALSE,7,52,58
10,0.25,32,55,55,0,TRUE,FALSE,9,52,60
11,0.25,32,55,55,0,TRUE,FALSE,12,51,62
12,0.25,32,55,55,0,TRUE,FALSE,7,53,59
13,0.25,32,55,55,0,TRUE,FALSE,8,53,60
14,0.25,32,55,55,0,TRUE,FALSE,9,51,59
15,0.25,32,55,55,0,TRUE,FALSE,8,52,59
16,0.25,32,55,55,0,TRUE,FALSE,11,50,60
17,0.25,32,55,55,0,TRUE,FALSE,7,53,59
18,0.25,32,55,55,0,TRUE,FALSE,8,51,58
19,0.25,32,55,55,0,TRUE,FALSE,6,53,58
20,0.25,32,55,55,0,TRUE,FALSE,8,52,59
21,0.25,32,55,55,0,TRUE,FALSE,7,52,58
22,0.25,32,55,55,0,TRUE,FALSE,6,51,56
23,0.25,32,55,55,0,TRUE,FALSE,9,50,58
24,0.25,32,55,55,0,TRUE,FALSE,4,55,58
25,0.25,32,55,55,0,TRUE,FALSE,7,52,58
26,0.25,32,55,55,0,TRUE,FALSE,7,52,58
27,0.25,32,55,55,0,TRUE,FALSE,8,50,57
28,0.25,32,55,55,0,TRUE,FALSE,10,50,59
29,0.25,32,55,55,0,TRUE,FALSE,7,51,57
30,0.25,32,55,55,0,FALSE,FALSE,6,51,57
31,0.25,32,55,55,0,TRUE,FALSE,7,52,58
32,0.25,32,55,55,0,FALSE,FALSE,1,57,57
33,0.25,32,55,55,0,FALSE,FALSE,3,52,54
34,0.25,32,55,55,0,TRUE,FALSE,6,51,56
35,0.25,32,55,55,0,TRUE,FALSE,10,51,60
36,0.25,32,55,55,0,TRUE,FALSE,8,52,59
37,0.25,32,55,55,0,FALSE,FALSE,6,52,58
38,0.25,32,55,55,0,TRUE,FALSE,7,51,57
39,0.25,32,55,55,0,TRUE,FALSE,9,52,60
40,0.25,32,55,55,0,TRUE,FALSE,5,53,57
41,0.25,32,55,55,0,TRUE,FALSE,12,50,61
42,0.25,32,55,55,0,TRUE,FALSE,7,53,59
43,0.25,32,55,55,0,TRUE,FALSE,5,52,56
44,0.25,32,55,55,0,TRUE,FALSE,11,50,60
45,0.25,32,55,55,0,TRUE,FALSE,4,52,55
46,0.25,32,55,55,0,FALSE,FALSE,3,57,59
47,0.25,32,55,55,0,TRUE,FALSE,6,53,58
48,0.25,32,55,55,0,TRUE,FALSE,8,54,61
49,0.25,32,55,55,0,TRUE,FALSE,6,53,58
50,0.25,32,55,55,0,FALSE,FALSE,2,52,53
1,0.4,32,55,55,0,TRUE,FALSE,6,53,58
2,0.4,32,55,55,0,TRUE,FALSE,6,53,58
3,0.4,32,55,55,0,TRUE,FALSE,4,54,57
4,0.4,32,55,55,0,TRUE,FALSE,8,52,59
5,0.4,32,55,55,0,TRUE,FALSE,4,53,56
6,0.4,32,55,55,0,TRUE,FALSE,6,53,58
7,0.4,32,55,55,0,TRUE,FALSE,5,53,57
8,0.4,32,55,55,0,TRUE,FALSE,5,53,57
9,0.4,32,55,55,0,TRUE,FALSE,3,55,57
10,0.4,32,55,55,0,TRUE,FALSE,6,53,58
11,0.4,32,55,55,0,TRUE,FALSE,7,52,58
12,0.4,32,55,55,0,TRUE,FALSE,3,54,56
13,0.4,32,55,55,0,TRUE,FALSE,5,54,58
14,0.4,32,55,55,0,TRUE,FALSE,8,52,59
15,0.4,32,55,55,0,FALSE,TRUE,0,NA,NA
16,0.4,32,55,55,0,TRUE,FALSE,5,53,57
17,0.4,32,55,55,0,TRUE,FALSE,5,54,58
18,0.4,32,55,55,0,TRUE,FALSE,6,53,58
19,0.4,32,55,55,0,TRUE,FALSE,6,53,58
20,0.4,32,55,55,0,TRUE,FALSE,6,53,58
21,0.4,32,55,55,0,TRUE,FALSE,6,53,58
22,0.4,32,55,55,0,FALSE,FALSE,1,56,56
23,0.4,32,55,55,0,TRUE,FALSE,6,52,57
24,0.4,32,55,55,0,TRUE,FALSE,4,53,56
25,0.4,32,55,55,0,TRUE,FALSE,5,52,56
26,0.4,32,55,55,0,TRUE,FALSE,5,53,57
27,0.4,32,55,55,0,TRUE,FALSE,4,54,57
28,0.4,32,55,55,0,TRUE,FALSE,5,53,57
29,0.4,32,55,55,0,TRUE,FALSE,5,54,58
30,0.4,32,55,55,0,TRUE,FALSE,6,53,58
31,0.4,32,55,55,0,TRUE,FALSE,5,53,57
32,0.4,32,55,55,0,TRUE,FALSE,5,53,57
33,0.4,32,55,55,0,TRUE,FALSE,5,53,57
34,0.4,32,55,55,0,TRUE,FALSE,4,53,56
35,0.4,32,55,55,0,TRUE,FALSE,5,53,57
36,0.4,32,55,55,0,TRUE,FALSE,6,53,58
37,0.4,32,55,55,0,FALSE,FALSE,1,54,54
38,0.4,32,55,55,0,TRUE,FALSE,7,52,58
39,0.4,32,55,55,0,TRUE,FALSE,5,53,57
40,0.4,32,55,55,0,TRUE,FALSE,7,52,58
41,0.4,32,55,55,0,TRUE,FALSE,6,53,58
42,0.4,32,55,55,0,TRUE,FALSE,5,52,56
43,0.4,32,55,55,0,TRUE,FALSE,5,53,57
44,0.4,32,55,55,0,TRUE,FALSE,6,52,57
45,0.4,32,55,55,0,TRUE,FALSE,7,53,59
46,0.4,32,55,55,0,TRUE,FALSE,5,53,57
47,0.4,32,55,55,0,TRUE,FALSE,6,52,57
48,0.4,32,55,55,0,TRUE,FALSE,5,53,57
49,0.4,32,55,55,0,TRUE,FALSE,4,53,56
50,0.4,32,55,55,0,TRUE,FALSE,6,54,59
1,0.55,32,55,55,0,FALSE,FALSE,2,54,56
2,0.55,32,55,55,0,TRUE,FALSE,4,54,57
3,0.55,32,55,55,0,TRUE,FALSE,5,53,57
4,0.55,32,55,55,0,TRUE,FALSE,3,54,56
5,0.55,32,55,55,0,TRUE,FALSE,4,53,56
6,0.55,32,55,55,0,FALSE,FALSE,4,53,57
7,0.55,32,55,55,0,TRUE,FALSE,5,53,57
8,0.55,32,55,55,0,TRUE,FALSE,4,54,57
9,0.55,32,55,55,0,TRUE,FALSE,2,55,56
10,0.55,32,55,55,0,TRUE,FALSE,4,53,56
11,0.55,32,55,55,0,TRUE,FALSE,4,53,56
12,0.55,32,55,55,0,TRUE,FALSE,4,54,57
13,0.55,32,55,55,0,TRUE,FALSE,4,53,56
14,0.55,32,55,55,0,TRUE,FALSE,4,54,57
15,0.55,32,55,55,0,TRUE,FALSE,3,54,56
16,0.55,32,55,55,0,TRUE,FALSE,4,54,57
17,0.55,32,55,55,0,TRUE,FALSE,5,53,57
18,0.55,32,55,55,0,FALSE,FALSE,1,54,54
19,0.55,32,55,55,0,TRUE,FALSE,5,53,57
20,0.55,32,55,55,0,TRUE,FALSE,3,55,57
21,0.55,32,55,55,0,TRUE,FALSE,4,54,57
22,0.55,32,55,55,0,TRUE,FALSE,4,54,57
23,0.55,32,55,55,0,TRUE,FALSE,3,54,56
24,0.55,32,55,55,0,TRUE,FALSE,5,53,57
25,0.55,32,55,55,0,TRUE,FALSE,2,54,55
26,0.55,32,55,55,0,TRUE,FALSE,4,54,57
27,0.55,32,55,55,0,TRUE,FALSE,4,54,57
28,0.55,32,55,55,0,FALSE,FALSE,1,54,54
29,0.55,32,55,55,0,TRUE,FALSE,4,54,57
30,0.55,32,55,55,0,TRUE,FALSE,3,55,57
31,0.55,32,55,55,0,TRUE,FALSE,4,54,57
32,0.55,32,55,55,0,TRUE,FALSE,5,53,57
33,0.55,32,55,55,0,TRUE,FALSE,5,53,57
34,0.55,32,55,55,0,FALSE,FALSE,2,54,56
35,0.55,32,55,55,0,TRUE,FALSE,5,53,57
36,0.55,32,55,55,0,TRUE,FALSE,6,53,58
37,0.55,32,55,55,0,TRUE,FALSE,3,54,56
38,0.55,32,55,55,0,TRUE,FALSE,3,54,56
39,0.55,32,55,55,0,TRUE,FALSE,3,53,55
40,0.55,32,55,55,0,TRUE,FALSE,6,53,58
41,0.55,32,55,55,0,TRUE,FALSE,5,53,57
42,0.55,32,55,55,0,TRUE,FALSE,5,53,57
43,0.55,32,55,55,0,TRUE,FALSE,3,54,56
44,0.55,32,55,55,0,TRUE,FALSE,3,54,56
45,0.55,32,55,55,0,TRUE,FALSE,4,54,57
46,0.55,32,55,55,0,TRUE,FALSE,4,53,56
47,0.55,32,55,55,0,TRUE,FALSE,5,53,57
48,0.55,32,55,55,0,FALSE,TRUE,0,NA,NA
49,0.55,32,55,55,0,TRUE,FALSE,4,53,56
50,0.55,32,55,55,0,TRUE,FALSE,4,54,57
1,0.7,32,55,55,0,TRUE,FALSE,3,54,56
2,0.7,32,55,55,0,TRUE,FALSE,4,54,57
3,0.7,32,55,55,0,TRUE,FALSE,4,53,56
4,0.7,32,55,55,0,TRUE,FALSE,3,54,56
5,0.7,32,55,55,0,TRUE,FALSE,2,54,55
6,0.7,32,55,55,0,TRUE,FALSE,5,53,57
7,0.7,32,55,55,0,TRUE,FALSE,4,54,57
8,0.7,32,55,55,0,TRUE,FALSE,2,54,55
9,0.7,32,55,55,0,TRUE,FALSE,4,54,57
10,0.7,32,55,55,0,TRUE,FALSE,4,54,57
11,0.7,32,55,55,0,FALSE,FALSE,1,56,56
12,0.7,32,55,55,0,TRUE,FALSE,3,54,56
13,0.7,32,55,55,0,TRUE,FALSE,4,54,57
14,0.7,32,55,55,0,TRUE,FALSE,3,54,56
15,0.7,32,55,55,0,TRUE,FALSE,3,54,56
16,0.7,32,55,55,0,TRUE,FALSE,3,54,56
17,0.7,32,55,55,0,FALSE,FALSE,2,56,57
18,0.7,32,55,55,0,TRUE,FALSE,3,54,56
19,0.7,32,55,55,0,TRUE,FALSE,5,53,57
20,0.7,32,55,55,0,TRUE,FALSE,4,54,57
21,0.7,32,55,55,0,TRUE,FALSE,5,53,57
22,0.7,32,55,55,0,TRUE,FALSE,2,55,56
23,0.7,32,55,55,0,TRUE,FALSE,5,53,57
24,0.7,32,55,55,0,TRUE,FALSE,3,54,56
25,0.7,32,55,55,0,TRUE,FALSE,3,54,56
26,0.7,32,55,55,0,TRUE,FALSE,4,54,57
27,0.7,32,55,55,0,TRUE,FALSE,3,54,56
28,0.7,32,55,55,0,FALSE,FALSE,1,54,54
29,0.7,32,55,55,0,TRUE,FALSE,3,54,56
30,0.7,32,55,55,0,TRUE,FALSE,3,54,56
31,0.7,32,55,55,0,TRUE,FALSE,4,54,57
32,0.7,32,55,55,0,TRUE,FALSE,3,54,56
33,0.7,32,55,55,0,FALSE,TRUE,0,NA,NA
34,0.7,32,55,55,0,TRUE,FALSE,4,54,57
35,0.7,32,55,55,0,TRUE,FALSE,4,54,57
36,0.7,32,55,55,0,TRUE,FALSE,3,54,56
37,0.7,32,55,55,0,TRUE,FALSE,3,54,56
38,0.7,32,55,55,0,TRUE,FALSE,3,54,56
39,0.7,32,55,55,0,TRUE,FALSE,4,54,57
40,0.7,32,55,55,0,TRUE,FALSE,4,54,57
41,0.7,32,55,55,0,TRUE,FALSE,4,54,57
42,0.7,32,55,55,0,TRUE,FALSE,4,54,57
43,0.7,32,55,55,0,TRUE,FALSE,4,54,57
44,0.7,32,55,55,0,TRUE,FALSE,3,54,56
45,0.7,32,55,55,0,TRUE,FALSE,3,54,56
46,0.7,32,55,55,0,TRUE,FALSE,3,54,56
47,0.7,32,55,55,0,TRUE,FALSE,3,54,56
48,0.7,32,55,55,0,TRUE,FALSE,3,54,56
49,0.7,32,55,55,0,TRUE,FALSE,4,53,56
50,0.7,32,55,55,0,TRUE,FALSE,3,54,56
\end{filecontents*}
\csvreader[
    longtable=rrrrrrrrrrr,
    table head=
        \caption{\textsc{Monte Carlo Results}}\label{tab:monte_carlo_results} \\
        \toprule
        rep\_id & effect & m & tau0 & tau\_hat & abs\_error & covered & cs\_empty & cs\_size & cs\_min & cs\_max \\
        \midrule\endfirsthead
        \caption*{\textsc{Monte Carlo Results} (continued)} \\
        \toprule
        rep\_id & effect & m & tau0 & tau\_hat & abs\_error & covered & cs\_empty & cs\_size & cs\_min & cs\_max \\
        \midrule\endhead
        \bottomrule\endfoot,
    late after line=\\
]{monte_carlo_results.csv}{1=\colA, 2=\colB, 3=\colC, 4=\colD, 5=\colE, 6=\colF, 7=\colG, 8=\colH, 9=\colI, 10=\colJ, 11=\colK}%
{\colA & \colB & \colC & \colD & \colE & \colF & \colG & \colH & \colI & \colJ & \colK}
\endgroup
\end{document}